\theoremstyle{plain}
\newtheorem{thm}{Theorem}[section]
\newtheorem{lem}[thm]{Lemma}
\newtheorem{prop}[thm]{Proposition}
\theoremstyle{definition}
\newtheorem{exmp}{Example}[section]
\theoremstyle{remark}
\newtheorem{rem}{Remark}[section]
\numberwithin{equation}{section}
\numberwithin{figure}{section}
\def\Real{{\mathbb{R}}}
\def\Complex{{\mathbb{C}}}
\def\Re{\mathop{\rm Re}\nolimits}
\def\Im{\mathop{\rm Im}\nolimits}
\begin{document}
\title{The Laplace Equation in the Exterior of the Hankel Contour and Novel Identities for Hypergeometric Functions}
\author{A.S. Fokas$^\dagger$ and M.L. Glasser$^\ddagger$ \\
{\small $^\dagger$Department of Applied Mathematics and Theoretical Physics}\\
{\small University of Cambridge, CB3 0WA, UK and} \\
{\small Research Center of Mathematics, Academy of Athens, Greece}\\
{\small $^\ddagger$Department of Physics, Clarkson University, Potsdam, NY 13699-5820, USA}
}

\maketitle

\begin{abstract}
By employing conformal mappings, it is possible to express the solution of certain boundary value problems for the Laplace equation in terms of a single integral involving the given boundary data. We show that such explicit formulae can be used to obtain novel identities for special functions. A convenient tool for deriving this type of identities is the so-called \emph{global relation}, which has appeared recently in a wide range of boundary value problems. As a concrete application, we analyze the Neumann boundary value problem for the Laplace equation in the exterior of the so-called Hankel contour, which is the contour that appears in the definition of both the gamma and the Riemann zeta functions. By utilizing the explicit solution of this problem, we derive a plethora of novel identities involving the hypergeometric function.
\end{abstract}

\section{Introduction}

Simple boundary value problems for the Laplace equation in two dimensions can be solved by the powerful technique of conformal mappings. However, the usual implementation of the conformal mappings presented in most text books is limited, because it applies only to the case that the given boundary data is piecewise constant. In section \ref{s:2} we present a technique which can be applied to the general case of arbitrary boundary data (see also \cite{Henricibook}). By employing this technique, we construct in section \ref{s:2} a solution of the Neumann boundary value of the Laplace equation in the domain $D$, which is the exterior of the so-called Hankel contour $H$ (see figure \ref{Hankel}) defined by
\begin{equation}
\label{H}
H=\{ z=r e^{-i\alpha}, ~a<r<\infty\}\cup \{ z=a e^{-i\theta}, ~-\alpha<\theta<\alpha\}\cup
\{ z=r e^{i\alpha}, ~a<r<\infty\},
\end{equation}
where
\begin{equation}
 0<a<2\pi, \qquad |\alpha|<\pi.
\nonumber
\end{equation}
A solution of the above Neumann boundary value problem is expressed in Theorem \ref{thm2.1} in terms of a single integral involving the three functions $\{g_+(r), g(\varphi), g_-(r)\}$ defining the Neumann data. Using this explicit solution, we derive in section \ref{s:3} four integral identities. A particular case of these four identities can be derived by employing the explicit solutions
\begin{equation}
 q(r,\theta)=\Re (z^k), \quad q(r,\theta)=\Im (z^k), \qquad z=re^{i\theta},
\quad r>0, \quad \theta\in\Real, \quad \Re k>0.
\end{equation}
In the general case, these identities are derived using the so-called \emph{global relation}. We recall that this relation plays a crucial role for the contruction of the solution of a wide class of initial-boundary value problems; for evolution and elliptic PDEs respectively, see for example \cite{nlty21pT195,prsla465p3341,prs464p1823,ima67p1,ima70p1,prsa461p2965,mpcps136p361} and \cite{mpcps138p339,prsa2012,
pre64p016114,prsla460p1285,prsla461p2721,prsla457p371,qjmam55p457,%
prsa466p2259,prsa466p2283}. Here, instead of employing the global relation to solve the given boundary value problem, having already constructed the solution via conformal mappings, we use the global relation to obtain the integral identities mentioned above.

Regarding the global relation, we note that in addition to its basic role for the analytical solution of a large class of PDEs \cite{prssa453p1411,jmp41p4188,ASFbook}, it has also been utilized in the following contexts: $(a)$ It yields a novel non-local formulation of the classical problem of water waves \cite{jfm562p313,jfm2011,jfm675p141,jfm2012ASF,jfm631p375}; $(b)$ it provides a useful approach to Hele-Shaw type problems \cite{nlty21pT205}; $(c)$ it gives rise to novel numerical techniques for elliptic PDEs in the interior of a convex polygon \cite{ASFEAS2009,prsa467p2983,jcam236p2515,jcam219p9,ima30p1184}.

%
\begin{figure}[t!]
\kern\medskipamount
\centerline{
\includegraphics[width=0.335\textwidth]{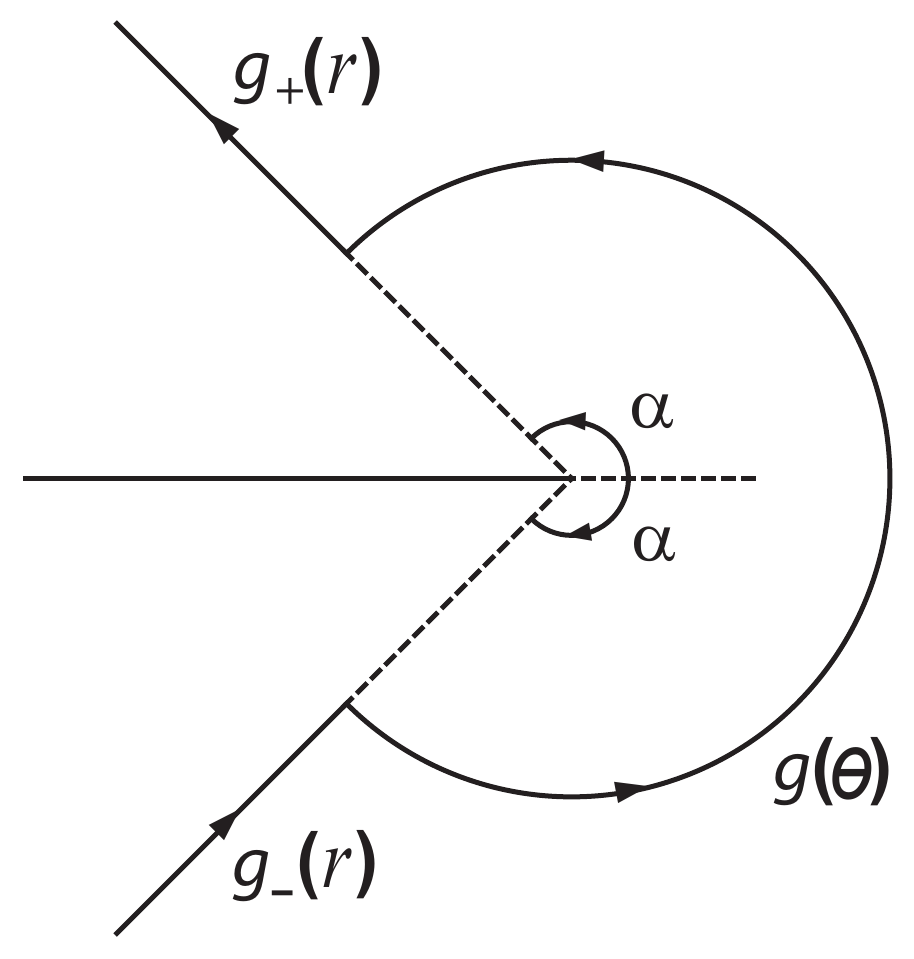}}
\caption{ The Hankel contour.
}
\label{Hankel}
\vskip3\medskipamount
\end{figure}
%

\section{A Neumann Boundary Value Problem for the Laplace Equation}
\label{s:2}
The Dirichlet and Neumann boundary value problems for the Laplace equation in the exterior of the Hankel contour can be solved via conformal mappings. In this respect we note that the usual implementation of the conformal mappings presented in most text books fails here, because it only applies to the case of piecewise constant Dirichlet or Neumann data. For the case of arbitrary Neumann data the following result is useful.

\begin{prop}
 Let $q(r,\theta)$ satisfy a Neumann boundary value problem for the Laplace equation. Let $\omega(z)$ be
the conformal mapping of the relevant domain to the upper half of the complex $\omega$-plane, where
\begin{equation}
\omega(z)=x+iy,\quad  z=re^{i\theta}, \quad  x,y,\theta\in \Real, \quad r>0.
\end{equation}
Then,
\begin{equation}
 q(r,\theta)=\frac1{2\pi}\int_{-\infty}^{\infty} \ln\big[(\xi-x)^2+y^2\big]q_y(\xi,0)d\xi,
\label{qrtheta}
\end{equation}
where
\begin{equation}
 x=\Re\omega(r,\theta), \qquad y=\Im\omega(r,\theta)
\end{equation}
and $q_y(\xi,0)$ can be computed in terms of the given Neumann data using the identity
\begin{equation}
 q_y(x,0)dx=\left[\frac1{r}q_\theta dr-rq_rd\theta\right]_{y=0}.
\label{qydx}
\end{equation}

\end{prop}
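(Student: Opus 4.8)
The plan is to transplant the whole problem to the upper half $\omega$-plane, where the Neumann problem for the Laplacian has a classical explicit solution, and then to pull both the representation formula and the transformed boundary data back to the original variables. Since $\omega$ is holomorphic, harmonicity is preserved: viewed as a function of $(x,y)=(\Re\omega,\Im\omega)$, $q$ is harmonic on $\{y>0\}$, and the prescribed Neumann condition on $H$ becomes a Neumann condition on the real axis with some boundary flux density, which we denote $q_y(\xi,0)$. For harmonic functions on the half-plane I would use the Neumann function obtained by the method of images (a positive image source reflected across $y=0$),
\[
N(x,y;\xi)=\frac{1}{2\pi}\ln\!\big[(\xi-x)^2+y^2\big],
\]
and check its two defining properties: $\Delta_{(x,y)}N=0$ for $y>0$, which is immediate because $\ln[(\xi-x)^2+y^2]=2\ln\big|(x,y)-(\xi,0)\big|$ is harmonic away from the source point $(\xi,0)$; and $\partial_yN=\frac1\pi\,y/[(\xi-x)^2+y^2]$ is precisely the Poisson kernel, so $\partial_yN(x,y;\xi)\to\delta(x-\xi)$ as $y\downarrow0$. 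Superposing $N$ against the boundary density and differentiating under the integral sign then shows that the right-hand side of \eqref{qrtheta} is harmonic in $(x,y)$ and has the prescribed normal derivative on $y=0$; substituting $x=\Re\omega(r,\theta)$ and $y=\Im\omega(r,\theta)$ yields \eqref{qrtheta}, up to an additive constant fixed by the behaviour at infinity.

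For the transformation rule \eqref{qydx} I would use the conjugate (flux) differential $\sigma:=-q_y\,dx+q_x\,dy$, which for harmonic $q$ is a closed $1$-form, since $d\sigma=\Delta q\,dx\wedge dy=0$. Because $\omega$ is holomorphic and the two-dimensional Hodge star on $1$-forms is conformally invariant, $\sigma$ is an intrinsic object: the conjugate differential computed in the $z$-plane is the $\omega$-pullback of the conjugate differential computed upstairs. Evaluating it in polar coordinates, from the Euclidean Hodge duals $*dr=r\,d\theta$ and $*d\theta=-\frac1r\,dr$ one finds $\sigma=*dq=r q_r\,d\theta-\frac1r q_\theta\,dr$. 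Along the boundary correspondence $H\leftrightarrow\{y=0\}$ the image boundary has $dy=0$, so there $\sigma=-q_y\,dx$; equating the two expressions gives $q_y(x,0)\,dx=\big[\frac1r q_\theta\,dr-r q_r\,d\theta\big]_{y=0}$, which is \eqref{qydx}. Equivalently, this merely records that $\omega$ carries the normal-derivative flux across $H$ into the normal-derivative flux across the real axis.

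The step I expect to demand the most care is not the formal algebra above but the analytic justification of \eqref{qrtheta} itself. On the unbounded exterior domain $D$ the Neumann problem determines $q$ only once one prescribes its growth at infinity (and the admissible singular behaviour at the corners of $H$, which must be controlled so that $\omega$ maps them to ordinary boundary points carrying no spurious contribution); the additive constant then has to be pinned down, and the logarithmic integral in \eqref{qrtheta} converges only under the flux-balance condition $\int_{-\infty}^{\infty}q_y(\xi,0)\,d\xi=0$ inherited from the solvability of the Neumann problem. Verifying that the data $\{g_+(r),g(\varphi),g_-(r)\}$ produce, through \eqref{qydx}, a boundary density for which \eqref{qrtheta} is simultaneously convergent and correctly normalized is the real content behind the clean statement; the remaining steps are the bookkeeping sketched above.
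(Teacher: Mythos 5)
Your proposal is correct and follows essentially the same route as the paper: your conformal invariance of the conjugate differential $*dq=-q_y\,dx+q_x\,dy$ is, up to sign, exactly the paper's identity $q_y\,dx\big|_{y=0}=i\big[q_\omega d\omega-q_{\bar\omega}d\bar\omega\big]_{y=0}$ combined with $q_\omega d\omega=q_z dz$, and your polar-coordinate evaluation reproduces the same computation. The only additions are that you derive the half-plane Neumann representation (which the paper simply cites as the well-known Poisson formula for the Neumann problem) and you flag the flux-balance and normalization issues, both of which are welcome but not departures from the paper's argument.
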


\begin{proof}
 Observe that
\begin{equation}
 q_y dx\big|_{y=0}=i\big[q_\omega d\omega-q_{\bar\omega}d\bar\omega\big]_{y=0}.
\label{qydxa}
\end{equation}
Indeed, the RHS of this equation equals
\begin{equation}
 \frac{i}2\big[(q_x-iq_y\big)dx-(q_x+iq_y)dx\big]_{y=0},
\nonumber
\end{equation}
which equals the LHS of \eqref{qydxa}.

The conformal mapping implies
\begin{equation}
 q_\omega d\omega=q_z dz.
\nonumber
\end{equation}
Furthermore, using $z=re^{i\theta}$, we find
\begin{align}
 q_z dz &= \left[ \frac{e^{-i\theta}}2\left(q_r-\frac{i}{r}q_\theta\right)\right]
\left[e^{i\theta}(dr+ird\theta)\right]
\nonumber\\
&=\frac{1}{2}\big[q_rdr+q_\theta d\theta\big]+\frac{i}{2}\left[ rq_rd\theta-\frac1{r}q_\theta dr\right].
\nonumber
\end{align}
Hence,
\begin{equation}
 q_\omega d\omega-q_{\bar \omega}d\bar\omega=i\left[rq_r d\theta-\frac1{r}q_\theta dr\right]
\nonumber
\end{equation}
and equation \eqref{qydxa} becomes \eqref{qydx}.

Equation \eqref{qrtheta} is the well known Poisson formula for the Neumann problem.
\qedhere

\end{proof}

\begin{thm}\label{thm2.1}
Let the domain $D$ { be} defined by
\begin{equation}
\label{D}
D=\{ a<r<\infty, \ -\alpha<\theta<\alpha, \}, \quad 0<a<2\pi, \ \frac{\pi}{2} <\alpha \leq  \pi.
\end{equation}
Let $q(r,\theta)$ solve the following Neumann boundary value problem for the Laplace equation in the domain $D$:
\begin{subequations}
\label{BCs}
\begin{equation}
q_\theta(r,-\alpha)=g_{-}(r), \quad a<r<\infty,
 \end{equation}
\begin{equation}
 q_r(a,\theta)=g(\theta), \quad -\alpha<\theta<\alpha,
\end{equation}
\begin{equation}
 q_\theta(r, \alpha)=g_{+}(r), \quad a<r<\infty,
\end{equation}
\end{subequations}
where the functions $g_{\pm}(r)$ and $g(\theta)$ have appropriate smoothness and decay.
A solution of this boundary value problem is given by
\begin{subequations}
\begin{equation}
\label{sol_Lap}
\begin{split}
q(r,\theta) &= -\frac{1}{2\pi}\int_a^{\infty}\frac{g_+(\rho)}{\rho}\ln\left\{\frac{1}{4}\left[\left(\frac{r}{a}\right)^{\frac{\pi}{2\alpha}}-\left(\frac{r}{a}\right)^{-\frac{\pi}{2\alpha}}\right]^2 \cos^2\left(\frac{\pi\theta}{2\alpha}\right)+\left[\frac{R(\rho)}{2}-\frac{R(r)}{2}\sin\left(\frac{\pi \theta}{2\alpha}\right)\right]^2\right\}d\rho \\
&+\frac{1}{2\pi}\int_a^{\infty}\frac{g_-(\rho)}{\rho}\ln\left\{\frac{1}{4}\left[\left(\frac{r}{a}\right)^{\frac{\pi}{2\alpha}}-\left(\frac{r}{a}\right)^{-\frac{\pi}{2\alpha}}\right]^2 \cos^2\left(\frac{\pi\theta}{2\alpha}\right)+\left[\frac{R(\rho)}{2}+\frac{R(r)}{2}\sin\left(\frac{\pi \theta}{2\alpha}\right)\right]^2\right\}d\rho \\
&+\frac{a}{2\pi}\int_{-\alpha}^{\alpha}g(\varphi)\ln\left\{\frac{1}{4}\left[\left(\frac{r}{a}\right)^{\frac{\pi}{2\alpha}}-\left(\frac{r}{a}\right)^{-\frac{\pi}{2\alpha}}\right]^2 \cos^2\left(\frac{\pi\theta}{2\alpha}\right)+\left[\sin\left(\frac{\pi\varphi}{2\alpha}\right)-\frac{R(r)}{2}\sin\left(\frac{\pi \theta}{2\alpha}\right)\right]^2\right\}d\varphi,
\end{split}
\end{equation}
where $R(\rho)$ is defined by
\begin{equation}
 R(\rho)=\left(\frac{\rho}{a}\right)^{\frac{\pi}{2\alpha}}
+\left(\frac{\rho}{a}\right)^{-\frac{\pi}{2\alpha}}
\end{equation}
\end{subequations}
and principal value integrals are assumed if needed.
\end{thm}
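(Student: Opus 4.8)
The plan is to invoke the Proposition with the explicit conformal map of $D$ onto the upper half $\omega$--plane, and then to split the Poisson integral \eqref{qrtheta} into the three pieces produced by the three smooth arcs of $\partial D$ appearing in \eqref{BCs}.

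\emph{Constructing the map.} I would obtain $\omega$ as a composition of elementary conformal maps. Since $|\theta|<\alpha\le\pi$ on $D$, the power $z\mapsto(z/a)^{\pi/(2\alpha)}$ is single valued on $D$ and sends it bijectively onto $\{|\zeta|>1\}\cap\{\Re\zeta>0\}$, carrying the arc $r=a$ to the right half of the unit circle and the rays $\theta=\pm\alpha$ to the half--lines $\pm i[1,\infty)$. The half--Joukowski map $\zeta\mapsto\frac12(\zeta-\zeta^{-1})$ then maps this region conformally onto the right half--plane $\{\Re>0\}$: it is injective on $\{|\zeta|>1\}$ since the two solutions of $\zeta-\zeta^{-1}=2w$ have product $-1$, and the image is exactly the right half--plane because the right half of the unit circle is sent to the segment $(-i,i)$, each ray $\pm i[1,\infty)$ is sent to one component of the imaginary axis outside $[-i,i]$, while e.g. $\zeta=2\mapsto 3/4$. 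Rotating by $i$ to turn $\{\Re>0\}$ into the upper half--plane gives
\[
\omega(z)=\frac{i}{2}\left[\left(\frac za\right)^{\frac{\pi}{2\alpha}}-\left(\frac za\right)^{-\frac{\pi}{2\alpha}}\right].
\]
Writing $z=re^{i\theta}$ and separating real and imaginary parts one finds
\[
x=\Re\omega=-\frac{R(r)}{2}\sin\frac{\pi\theta}{2\alpha},\qquad
y=\Im\omega=\frac12\left[\left(\frac ra\right)^{\frac{\pi}{2\alpha}}-\left(\frac ra\right)^{-\frac{\pi}{2\alpha}}\right]\cos\frac{\pi\theta}{2\alpha},
\]
and $y>0$ on $D$. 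In particular the arc $\theta=\alpha$ maps onto $(-\infty,-1)$ via $\xi=-R(\rho)/2$, the arc $r=a$ onto $(-1,1)$ via $\xi=-\sin(\pi\varphi/2\alpha)$, and the arc $\theta=-\alpha$ onto $(1,\infty)$ via $\xi=R(\rho)/2$; these three images tile $\partial(\text{UHP})$, as they must.

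\emph{Transporting the data and assembling.} On each of the three arcs exactly one of $dr$, $d\theta$ vanishes, so the one--form identity \eqref{qydx} simplifies: on $\theta=\pm\alpha$ it becomes $q_y(\xi,0)\,d\xi=r^{-1}g_\pm(r)\,dr$, and on $r=a$ it becomes $q_y(\xi,0)\,d\xi=-a\,g(\varphi)\,d\varphi$. Substituting these, and the parametrizations of $\xi$ above, into \eqref{qrtheta} split as $\int_{-\infty}^{-1}+\int_{-1}^{1}+\int_{1}^{\infty}$, and being careful about orientation — increasing $\xi$ on $(-\infty,-1)$ means decreasing $\rho$, which supplies the overall minus sign of the $g_+$ term; increasing $\xi$ on $(-1,1)$ means decreasing $\varphi$, which cancels the sign in $-a\,g\,d\varphi$; increasing $\xi$ on $(1,\infty)$ means increasing $\rho$ — turns the three pieces into the three integrals over $\rho\in(a,\infty)$ and $\varphi\in(-\alpha,\alpha)$ of \eqref{sol_Lap}. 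Finally, evaluating $(\xi-x)^2+y^2$ at $\xi=-R(\rho)/2$, $\xi=R(\rho)/2$ and $\xi=-\sin(\pi\varphi/2\alpha)$, and using $-x=\tfrac12 R(r)\sin(\pi\theta/2\alpha)$ together with the formula for $y$, reproduces verbatim the three logarithmic kernels in \eqref{sol_Lap}.

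\emph{Main obstacle.} The real work is the conformal--mapping step: spotting the correct composition — in particular the half--Joukowski map and the check that it maps onto a half--plane rather than a slit plane — and then separating $\omega$ into real and imaginary parts so that the algebraic combinations $y^2$ and $(\xi-x)^2$ match the kernels of the theorem on the nose. The computations in the last paragraph (the one--form on each arc, the orientation bookkeeping that fixes the three signs and the factor $a$) are routine, as is the handling of the points where a logarithm's argument vanishes, namely the images $\xi=\pm1$ of the corners $r=a,\ \theta=\pm\alpha$; this is precisely what necessitates the principal--value prescription mentioned in the statement, and convergence of the integrals at $\rho\to\infty$ and at those corners is ensured by the assumed smoothness and decay of $g_\pm$ and $g$.
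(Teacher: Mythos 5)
Your proposal is correct and follows essentially the same route as the paper: the same conformal map $\omega(z)=\frac{i}{2}\big[(z/a)^{\pi/2\alpha}-(z/a)^{-\pi/2\alpha}\big]$, the same use of the Poisson formula \eqref{qrtheta} with the one-form identity \eqref{qydx} restricted to the three boundary arcs, and the same boundary parametrizations $\xi=\mp R(\rho)/2$ and $\xi=-\sin(\pi\varphi/2\alpha)$. The only differences are cosmetic: you derive the map as a composition (the paper merely asserts it is "straightforward to verify") and substitute the kernel directly, whereas the paper first expands it and then recombines via $(R(r)/2)^2-1=\tfrac14\big[(r/a)^{\pi/2\alpha}-(r/a)^{-\pi/2\alpha}\big]^2$.
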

\begin{proof}
It is straightforward to verify that the function $\omega(z)$ defined by
\begin{equation}
 \omega(z)=\frac{i}2\left[ \left(\frac{z}{a}\right)^{\frac{\pi}{2\alpha}}
-\left(\frac{z}{a}\right)^{\frac{\pi}{-2\alpha}}\right],
\end{equation}
maps the domain $D$ defined in \eqref{D} to the upper half complex $\omega$-plane. The points
$\{ ae^{i\alpha}, ae^{-i\alpha}, a\}$ are mapped to the points $\{ -1,1,0\}$ respectively. Also
\begin{align}
 x &= -\frac12 \sin\left(\frac{\pi\theta}{2\alpha}\right)\left[ \left(\frac{r}{a}\right)^{\frac{\pi}{2\alpha}}
+\left(\frac{r}{a}\right)^{-\frac{\pi}{2\alpha}}\right]
\nonumber\\
y &=\frac12 \cos\left(\frac{\pi\theta}{2\alpha}\right)\left[ \left(\frac{r}{a}\right)^{\frac{\pi}{2\alpha}}
-\left(\frac{r}{a}\right)^{-\frac{\pi}{2\alpha}}\right].
\label{xny}
\end{align}
The condition $y=0$ implies either $\theta=\pm\alpha$ or $r=a$. Hence,
\begin{equation}
 q_y(\xi,0)d\xi=\left[\begin{aligned} &-aq_r(a,\varphi)d\varphi\\
                 &\frac1\rho q_\theta(\rho,\pm\alpha)d\rho
                \end{aligned}\right.  .
\nonumber
\end{equation}
Thus, equation \eqref{qrtheta} becomes
\begin{subequations}
\begin{align}
 q(r,\theta)=& \frac1{2\pi}\int_{\infty}^a \ln\tilde F(\rho,\alpha) q_\theta(\rho,\alpha)\frac{d\rho}{\rho}
+\frac1{2\pi}\int_{a}^{\infty} \ln\tilde F(\rho,-\alpha) q_\theta(\rho,-\alpha)\frac{d\rho}{\rho}
\nonumber\\
&+\frac1{2\pi}\int_{-\alpha}^{\alpha} \ln\tilde F(a,\varphi)\big[-a q_r(a,\varphi)\big]d\varphi,
\label{qrthetaint}
\end{align}
where
\begin{equation}
\tilde F(\rho,\varphi)=\big[\xi(\rho,\varphi)-x(r,\theta)\big]^2+y^2(r,\theta)
\end{equation}
\end{subequations}
and for convenience of notation we have suppressed the $(r,\theta)$ dependence of $\tilde F$.

Using
\begin{align}
 \xi(\rho,\alpha)&=-\frac12\sin\left(\frac{\pi}2\right)
\left[\left(\frac{\rho}{\alpha}\right)^{\frac{\pi}{2\alpha}}
+\left(\frac{\rho}{\alpha}\right)^{-\frac{\pi}{2\alpha}}\right]=-\frac12 R(\rho),
\nonumber\\
\xi(\rho,-\alpha)&=-\frac12\sin\left(-\frac{\pi}2\right)
\left[\left(\frac{\rho}{\alpha}\right)^{\frac{\pi}{2\alpha}}
+\left(\frac{\rho}{\alpha}\right)^{-\frac{\pi}{2\alpha}}\right]=\frac12 R(\rho),
\nonumber\\
{\xi(a,\varphi)}&=-\sin\left(\frac{\pi\varphi}{2\alpha}\right),
\nonumber
\end{align}
as well as the boundary conditions \eqref{BCs}, equation \eqref{qrthetaint} becomes the following equation:
\begin{align}
 q(r,\theta)&= -\frac{1}{2\pi}\int_a^{\infty} \frac{g_+(\rho)}{\rho}\ln\left[ \frac14 R^2(r)
-\cos^2\left(\frac{\pi\theta}{2\alpha}\right)+\frac14R^2(\rho)
-\frac12\sin\left(\frac{\pi\theta}{2\alpha}\right)R(\rho)R(r)\right]d\rho
\nonumber\\
&+\frac{1}{2\pi}\int_a^{\infty} \frac{g_-(\rho)}{\rho}\ln\left[ \frac14 R^2(r)
-\cos^2\left(\frac{\pi\theta}{2\alpha}\right)+\frac14R^2(\rho)
+\frac12\sin\left(\frac{\pi\theta}{2\alpha}\right)R(\rho)R(r)\right]d\rho
\nonumber\\
&+\frac{a}{2\pi}\int_{-\alpha}^{\alpha} g(\rho)\ln\left[ \frac14 R^2(r)
-\cos^2\left(\frac{\pi\theta}{2\alpha}\right)+\sin^2\left(\frac{\pi\varphi}{2\alpha}\right)
-\sin\left(\frac{\pi\varphi}{2\alpha}\right)\sin\left(\frac{\pi\theta}{2\alpha}\right)R(r)
\right]d\varphi.
\label{qrthetainta}
\end{align}

By employing the identity
\begin{equation}
 \left(\frac{R(r)}{2}\right)^2-1=\frac14\left[\left(\frac{r}{a}\right)^{\frac{\pi}{2\alpha}}
-\left(\frac{r}{a}\right)^{-\frac{\pi}{2\alpha}}\right]^2,
\nonumber
\end{equation}
equation \eqref{qrthetainta} becomes equation \eqref{sol_Lap}.
\qedhere

\end{proof}

\begin{rem}
(Asymptotics) The large $r$ asymptotics of the solution $q(r,\theta)$ defined by \eqref{sol_Lap} is given by
\begin{equation}
 q(r,\theta)=\frac1{2\pi}\left[ \ln\left(\frac14\right)
+\frac{\pi}{\alpha}\ln\left(\frac{r}{a}\right)\right] S
+\frac1{\pi}\sin\left(\frac{\pi\theta}{2\alpha}\right)\left(\frac{r}{a}\right)^{-\frac{\pi}{2\alpha}}\tilde S
+{O\left[\left(\frac{r}{a}\right)^{-\frac{2\pi}{\alpha}}\right]}, \quad r\to\infty,
\label{qasympt}
\end{equation}
where
\begin{equation}
 S=\int_a^\infty \big[g_-(\rho)-g_+(\rho)\big]\frac{d\rho}{\rho}+a\int_{-\alpha}^\alpha g(\varphi)d\varphi
\label{defS}
\end{equation}
and
\begin{equation}
 \tilde S=\int_a^\infty \big[g_+(\rho)+g_-(\rho)\big]R(\rho)\frac{d\rho}{\rho}
-2a\int_{-\alpha}^\alpha g(\varphi)\sin\left(\frac{\pi\varphi}{2\alpha}\right) d\varphi.
\label{defStilde}
\end{equation}
Indeed, each of the brackets of the RHS of \eqref{sol_Lap} can be written in the form
\begin{equation}
 F(R(r),T)=\ln\left[ \frac{R^2(r)}4-TR(r)+O(1)\right], \qquad r\to\infty,
\nonumber
\end{equation}
where $T$ is given, respectively, for each bracket by
\begin{equation}
 \frac12\sin\left(\frac{\pi\theta}{2\alpha}\right)R(\rho),\quad -\frac12\sin\left(\frac{\pi\theta}{2\alpha}\right)R(\rho),
\quad \sin\left(\frac{\pi\theta}{2\alpha}\right)\sin\left(\frac{\pi\varphi}{2\alpha}\right).
\nonumber
\end{equation}
As $r\to\infty$, we find
\begin{align}
 F(R(r),T)&=\ln\left\{ \frac{R^2(r)}4\left[1-4T\left(\frac{r}{a}\right)^{-\frac{\pi}{2\alpha}}
+O\left(\left(\frac{r}{a}\right)^{-\frac{\pi}{\alpha}}\right)\right]\right\}
\nonumber\\
&=\ln\left(\frac14\right)+{2}\ln\left[ R(r)\right]-4T\left(\frac{r}{a}\right)^{-\frac{\pi}{2\alpha}}
+O\left(\left(\frac{r}{a}\right)^{-\frac{\pi}{\alpha}}\right)
\nonumber\\
&=\ln\left(\frac14\right)+\frac{\pi}{\alpha}\ln\left(\frac{r}{a}\right)-4T\left(\frac{r}{a}\right)^{-\frac{\pi}{2\alpha}}
+O\left(\left(\frac{r}{a}\right)^{-\frac{\pi}{\alpha}}\right),\quad r\to\infty.
\label{FRasympt}
\end{align}

Using the estimate \eqref{FRasympt}, equation \eqref{sol_Lap} implies \eqref{qasympt}.

\end{rem}

\begin{rem}
 (A particular example) Suppose that we choose the functions $\{g_+(\rho),g_-(\rho),g(\varphi)\}$
as follows:
\begin{equation}
 g_{\pm}(\rho)=\big(\rho e^{\pm i\alpha})^s u\big(\rho e^{\pm i\alpha}\big),
\quad g(\varphi)=-\frac{i}{a}\big(a e^{i\varphi}\big)^s u\big(a e^{i\varphi}\big), \quad s\in\Complex.
\label{defgpm}
\end{equation}
Using these functions in the definition \eqref{defS} of $S$, we find
\begin{equation}
 S(s)=-\int_{H} z^{s-1}u(z) dz, \qquad z\in\Complex.
\label{Sint}
\end{equation}
Indeed, recall that $H$ involves three integrals; making in the first, second and third integrals of the RHS of \eqref{Sint} the substitutions, respectively,
\begin{equation}
 z=\rho e^{i\alpha}, \quad z=\rho e^{-i\alpha}, \quad z=a e^{i\varphi},
\nonumber
\end{equation}
we find that the RHS of \eqref{Sint} equals $-S$.

\end{rem}

\begin{rem}
 (The Riemann function) Suppose we chose the functions
$\{g_+(\rho), g_-(\rho), g(\varphi)\}$ by \eqref{defgpm},
where
\begin{equation}
 u(z)=\frac{1}{e^{-z}-1}.
\label{defu}
\end{equation}
Then the function $S(s)$ is proportional to the Riemann zeta function, namely
\begin{equation}
 S(s)=-\frac{2i\pi \zeta(s)}{\Gamma(1-s)}, \qquad z\in\Complex.
\label{defSa}
\end{equation}
Hence,the Riemann hypothesis is valid iff there does \emph{not} exist a solution of the Neumann boundary value problem defined in Theorem \ref{thm2.1} with the functions $\{g_+(\rho), g_-(\rho), g(\varphi)\}$ defined by equations \eqref{defgpm} and \eqref{defu}, which is bounded as $r\to\infty$.
\end{rem}

\medskip
\underline{\bf The Dirichlet Boundary Values}

\smallskip
By evaluating the RHS of equation \eqref{sol_Lap} at $r=a$ and at $\theta= \pm \alpha$ we find the following expressions for the Dirichlet boundary values:
\begin{equation}
\label{DBV1}
\begin{split}
q(a, \theta)=&-\frac{1}{\pi}\int_{a}^{\infty} \frac{g_+(\rho)}{\rho}\ln \left|\frac{1}{2}R(\rho)-\sin \left(\frac{\pi \theta}{2\alpha}\right)\right|d\rho+\frac{1}{\pi}\int_{a}^{\infty} \frac{g_-(\rho)}{\rho}\ln \left|\frac{1}{2}R(\rho)+\sin\left(\frac{\pi \theta}{2\alpha}\right)\right|d\rho, \\
&+\frac{a}{\pi}\int_{-\alpha}^{\alpha}g(\varphi)\ln\left|\sin\left(\frac{\pi \varphi}{2\alpha}\right)-\sin\left(\frac{\pi\theta}{2\alpha}\right)\right|d\varphi, \qquad -\alpha<\theta<\alpha,
\end{split}
\end{equation}
\begin{equation}
\label{DBV2}
\makeatletter
  \def\tagform@#1{\maketag@@@{$(#1)^\pm$\@@italiccorr}}
\makeatother
 \begin{split}
q(r,\pm \alpha) = &-\frac{1}{\pi} \int^\infty_a  g_+(\rho)\ln \left| \frac{R(r)}{2} \mp \frac{R(\rho)}{2}\right| \frac{d\rho}{\rho}+ \frac{1}{\pi} \int^\infty_a  g_-(\rho)\ln \left| \frac{R(r)}{2} \pm \frac{R(\rho)}{2}\right|\frac{d\rho}{\rho} \\
&+ \frac{a}{\pi} \int^\alpha_{-\alpha}  g(\varphi) \ln \left| \frac{R(r)}{2} \mp \sin \left(\frac{\pi\varphi}{2\alpha}\right) \right|d\varphi, \qquad  a<r<\infty.
 \end{split}
\end{equation}
Adding and substracing equations \eqref{DBV2}$^+$ and \eqref{DBV2}$^-$, we find the following equations:
\begin{align}
\label{add_DBV2}
q(r,-\alpha)+q(r,\alpha)=&
\frac{1}{\pi}\int_a^\infty\left[g_-(\rho)-g_+(\rho)\right]\ln \left|\frac{R^2(\rho)}{4}-\frac{R^2(r)}{4}\right|\frac{d\rho}{\rho}
\nonumber\\
&+\frac{a}{\pi}\int_{-\alpha}^{\alpha} g(\varphi)\ln \left|\frac{R^2(r)}{4}-\left(\sin \frac{\pi\varphi}{2\alpha}\right)^2\right|d\varphi
\end{align}
and
\begin{align}
\label{sub_DBV2}
q(r,-\alpha)-q(r,\alpha)=&
\frac{1}{\pi}\int_a^\infty\left[g_-(\rho)+g_+(\rho)\right]\ln \left|\frac{R(r)-R(\rho)}{R(r)+R(\rho)}\right|\frac{d\rho}{\rho}
\nonumber\\
&+\frac{a}{\pi}\int_{-\alpha}^{\alpha} g(\varphi)\ln \left|\frac{R(r)+2\sin\left( \frac{\pi\varphi}{2\alpha}\right)}{R(r)-2\sin\left(\frac{\pi\varphi}{2\alpha}\right)}\right| d\varphi.
\end{align}

In what follows we assume that the given functions $g_{\pm}(r)$ and $g(\theta)$ satisfy the constraint $S=0$.

Let $c$ be a constant, then
\begin{equation}
\ln \left |\frac{R(r)^2}{4}+c   \right| = G(r)+O(r^{-\pi/\alpha}), \quad r \rightarrow \infty,
\end{equation}
where $G(r)$ is defined by
\begin{equation}
 G(r)=\frac{\pi}{\alpha} \ln \left(\frac{r}{a} \right) -\ln 4, \quad r>0.
\end{equation}
Each of the two logarithmic terms in the RHS of equation \eqref{add_DBV2} grows logarithmically as $r \rightarrow \infty$, however the condition $S=0$, implies that these two terms cancel. Indeed, using the fact that $G(r)S=0$, equation \eqref{add_DBV2} can be rewritten in the form
\begin{align}
\label{add_DBV2_new}
q(r,-\alpha)+q(r,\alpha)=&
\frac{1}{\pi}\int_a^\infty\left[g_-(\rho)-g_+(\rho)\right]\left[ \ln \left|\frac{R^2(r)}{4}-\frac{R^2(\rho)}{4}\right|-G(r) \right]\frac{d\rho}{\rho}
\nonumber\\
&+\frac{a}{\pi}\int_{-\alpha}^{\alpha} g(\varphi)\left[ \ln \left|\frac{R^2(r)}{4}-\left(\sin \frac{\pi\varphi}{2\alpha}\right)^2\right|-G(r) \right] d\varphi.
\end{align}
Equations \eqref{sub_DBV2} and \eqref{add_DBV2_new} are the basic equations needed for the derivation
of certain integral identities.

\section{The Global Relation and Certain Integral Identities}
\label{s:3}

The global relations for the Laplace equation in the domain $D$ are the following two equations:
\begin{equation}
\label{GRs_Lap}
\makeatletter
  \def\tagform@#1{\maketag@@@{$(#1)^\pm$\@@italiccorr}}
\makeatother
\int_{\partial D} e^{\pm ik\theta} r^k \left[ (-q_\theta \pm ikq) \frac{dr}{r} + (rq_r-kq)d\theta\right] =0, \quad \Re k < \frac{\pi}{2\alpha}.
\end{equation}
The above restriction in $k$ is the consequence of the large $r$ behaviour of $q$, namely of the estimate
\begin{equation}
q = O\left(r^{-\frac{\pi}{2\alpha}} \right), \qquad r \rightarrow \infty.
\nonumber
\end{equation}
If the domain $D$ is defined by \eqref{D}, then
equations \eqref{GRs_Lap}$^\pm$ become
\begin{equation}
\label{GRs_Lap2}
\makeatletter
  \def\tagform@#1{\maketag@@@{$(#1)^\pm$\@@italiccorr}}
\makeatother
 \begin{split}
k \left\{ a^k \int^\alpha_{-\alpha} e^{\pm ik\theta} q(a,\theta)d\theta \pm i \int^\infty_a r^k \left[ e^{\mp i\alpha k} q(r,-\alpha) - e^{\pm i\alpha k} q(r,\alpha)\right]  \frac{dr}{r} \right\} \\
=a^{k+1} \int^\alpha_{-\alpha} e^{\pm ik\theta} g(\theta)d\theta + \int^\infty_a r^k \left[ e^{\mp i\alpha k} g_-(r) - e^{\pm i\alpha k} g_+(r)\right]  \frac{dr}{r}.
 \end{split}
\end{equation}
Adding and subtracting \eqref{GRs_Lap2}$^\pm$ we obtain the following basic equations:
\begin{equation}
\label{add_GRs}
 \begin{split}
k \left\{ a^k\int^\alpha_{-\alpha} \cos(k\theta) q(a,\theta)d\theta + \sin(\alpha k) \int_a^\infty r^k \left[ q(r,-\alpha) + q(r,\alpha)\right] \frac{dr}{r}\right\}  \\
=a^{k+1} \int^\alpha_{-\alpha} \cos (k\theta) g(\theta)d\theta + \cos (\alpha k) \int^\infty_a r^k [g_-(r) - g_+(r)]\frac{dr}{r}
 \end{split}
\end{equation}
and
\begin{equation}
\label{sub_GRs}
 \begin{split}
k \left\{ a^k \int^\alpha_{-\alpha} \sin (k \theta) q(a,\theta) d\theta + \cos (\alpha k) \int^\infty_a r^k  \left[ q(r,-\alpha) - q(r,\alpha)\right] \frac{dr}{r} \right\}  \\
=a^{k+1} \int^\alpha_{-\alpha} \sin (k\theta) g(\theta)d\theta - \sin (\alpha k) \int^\infty_a r^k \left[ g_-(r) + g_+(r)\right] \frac{dr}{r}.
 \end{split}
\end{equation}
Replacing in the RHS of the relations \eqref{add_GRs} and \eqref{sub_GRs} $q(r,-\alpha)\pm q(r,\alpha)$ by the RHS of \eqref{sub_DBV2} and \eqref{add_DBV2_new}, we find the following equations, which are valid for $\Re k<\frac{\alpha}{2\pi}$:
\begin{equation}
\label{F12}
a\int_{-\alpha}^{\alpha} g(\varphi)F_1(\varphi,k)d\varphi+\int_a^\infty\left[g_-(\rho)-g_+(\rho)\right]F_2(\rho,k)\frac{d\rho}{\rho}=0
\end{equation}
and
\begin{equation}
\label{F34}
a\int_{-\alpha}^{\alpha} g(\varphi)F_3(\varphi,k)d\varphi+\int_a^\infty\left[g_-(\rho)+g_+(\rho)\right]F_4(\rho,k)\frac{d\rho}{\rho}=0,
\end{equation}
where the functions $\{F_j\}_1^4$ are defined as follows:
\begin{align}
F_1(\varphi,k)=& k\int_{-\alpha}^{\alpha}\cos (k\theta)\ln \left|\sin \left(\frac{\pi\varphi}{2\alpha}\right)+\sin\left(\frac{\pi\theta}{2\alpha}\right)\right|d\theta
\nonumber\\
&+k\sin(\alpha k) \int_a^\infty\left(\frac{r}{a}\right)^k\left[\ln \left|\frac{R^2(r)}{4}-\left(\sin\frac{\pi\varphi}{2\alpha}\right)^2\right|-G(r) \right]\frac{dr}{r}
\nonumber\\
&-\pi\cos (k\varphi), \kern8em -\alpha<\varphi<\alpha, \quad \Re k < \frac{\pi}{2\alpha},
\label{defF1}
\end{align}
\begin{align}
\label{defF2}
F_2(\rho,k)=&k\int_{-\alpha}^{\alpha}\cos (k\theta)\ln \left|\frac{R(\rho)}{2}
+\sin\left(\frac{\pi\theta}{2\alpha}\right)\right|d\theta
\nonumber\\
&+k\sin(\alpha k) \int_a^\infty\left(\frac{r}{a}\right)^k\left[ \ln\left|\frac{R^2(r)}{4}-\frac{R^2(\rho)}{4}\right|-G(r) \right] \frac{dr}{r}
\nonumber\\
& -\pi\cos (\alpha k)\left(\frac{\rho}{a}\right)^k, \kern8em a<\rho<\infty,
\quad \Re k < \frac{\pi}{2\alpha},
\end{align}
\begin{align}
\label{defF3}
F_3(\varphi,k)=&-k\int_{-\alpha}^{\alpha}\sin (k\theta)\ln \left|\sin\left( \frac{\pi\varphi}{2\alpha}\right)+\sin\left(\frac{\pi\theta}{2\alpha}\right)\right|d\theta
\nonumber\\
&+k\cos (\alpha k) \int_a^\infty\left(\frac{r}{a}\right)^k\ln\left|\frac{R(r)+2\sin\left( \frac{\pi\varphi}{2\alpha}\right)}{R(r)-2\sin\left(\frac{\pi\varphi}{2\alpha}\right)}\right|\frac{dr}{r}
\nonumber\\
&-\pi\sin (k\varphi), \kern8em -\alpha<\varphi<\alpha, \quad \Re k < \frac{\pi}{2\alpha},
\end{align}
\begin{align}
\label{defF4}
F_4(\rho,k)=& k\int_{-\alpha}^{\alpha}\sin (k\theta)\ln \left|\frac{R(\rho)}{2}+\sin\left( \frac{\pi\theta}{2\alpha}\right)\right|d\theta+k\cos(\alpha k) \int_a^\infty\left(\frac{r}{a}\right)^k\ln\left|\frac{R(r)-R(\rho)}{R(r)+R(\rho)}\right|\frac{dr}{r}
\nonumber\\
&+\pi\sin (\alpha k)\left(\frac{\rho}{a}\right)^k, \kern8em a<\rho<\infty, \quad \Re k < \frac{\pi}{2\alpha},
\end{align}
and principal value integrals are assumed if needed.

The function $g_{-}(\rho)-g_{+}(\rho)$ is related to $g(\varphi)$ through the equation $S=0$, whereas the function $g_{-}(\rho)+g_{+}(\rho)$ is independent of $g(\varphi)$. Hence equations \eqref{F12} and \eqref{F34} imply  the following basic identities:
\begin{equation}
 F_1(\varphi, k)=f(k), \quad F_2(\rho,k)=f(k)
\label{F1F2eq}
\end{equation}
and
\begin{equation}
F_3(\varphi, k)=0, \quad F_4(\rho, k)=0,
\label{F3F4eq}
\end{equation}
where
\begin{equation*}
 a<\rho<\infty, \quad -\alpha<\varphi<\alpha, \quad \Re k<\frac{\pi}{2 \alpha}
\end{equation*}
and $f(k)$ is some function of $k$.

The above equations will be verified explicitly in the next section, where it will also be shown that the function $f(k)$ is given by
\begin{equation}
 f(k)=-\sin (\alpha k) \left[2 \ln2+\frac{\pi}{\alpha k} \right].
\end{equation}

\begin{rem}
($\Re k<0$) Using the estimates
\begin{align}
 &\ln\left| \frac{R^2(r)}{4}-\bigg(\sin\frac{\pi\varphi}{2\alpha}\bigg)^2\right|-G(r)=O(r^{-\pi/\alpha}),
\quad |\varphi|<\frac{\pi}{2},\quad r\to\infty
\nonumber\\
\noalign{\noindent and}
&\ln\left| \frac{R^2(r)}{4}-\frac{R^2(\rho)}{\alpha}\right|-G(r)=O(r^{-\pi/\alpha}),
\quad 0<\rho<\infty,\quad r\to\infty,
\nonumber
\end{align}
it follows that the first two integrals in the RHS of two equation in \eqref{defF1} involving the above terms are well defined for $\Re k<\pi/2\alpha$.
If $k$ satisfies the stronger restriction $\Re k<0$, then it is \emph{not} necessary to { subtract} the term involving $G(r)$; actually in this case we find
\begin{equation}
 k\sin(\alpha k)\int_a^\infty \bigg(\frac{r}{a}\bigg)^k G(r) \frac{dr}{r}=f(k),\quad \Re k<0.
\end{equation}

\end{rem}

\begin{rem}
\label{rem3.2}
 ($\Re k<0$) If $\Re k<0$, then
\begin{equation}
 \tilde F_1(\varphi,k)=\tilde F_2(\rho,k)=F_3(\varphi,k)=F_4(\rho,k)=0, \quad |\varphi|<\frac{\pi}2,
\quad 0<\rho<\infty, \quad \Re k<0,
\label{F1F4}
\end{equation}
 where $\tilde F_1$ and $\tilde F_2$ denote the expression obtained form $F_1$ and $F_2$ by neglecting
the terms involving $G(r)$. Equations \eqref{F1F4} can be obtained \emph{directly} as follows:
the function
\begin{equation}
 q(r,\theta)=\Re \left(z^k\right)=r^k\cos(k\theta), \qquad \Re k<0,
\label{solnq}
\end{equation}
is a solution of the Neumann boundary value problem with
\begin{equation}
 g_{\pm}(r)=\mp kr^k\sin(k\alpha), \qquad g(\theta)=k a^{k-1}\cos(k\theta).
\label{BCgpm}
\end{equation}
Substituting $q(r,\alpha)=r^k\cos(k\alpha)$ and the expressions \eqref{BCgpm} in equation \eqref{DBV2}$^+$, we find
\begin{align}
 r^k\cos(k\alpha)=&\frac{k\sin(k\alpha)}{\pi}\int_a^\infty \rho^k\ln\left|\frac{R^2(\rho)}{4}
-\frac{R^2(r)}{4}\right| \frac{d\rho}{\rho}
\nonumber\\
&+\frac{ka^k}{\pi}\int_{-\alpha}^{\alpha}\cos(k\varphi)
\ln\left| \frac{R(r)}2-\sin\left(\frac{\pi\varphi}{2\alpha}\right)\right| d\varphi, \quad \Re k<0.
\end{align}
Replacing $\varphi$ with $-\varphi$ and dividing by $a^k/\pi$, we find $\tilde F_2=0$.

Similarly, substituting $q(a,\theta)=a^k\cos(k\theta)$ and the expressions \eqref{BCgpm}
in equation \eqref{DBV1}, we find $\tilde F_1=0$.

The function
\begin{equation}
 q(r,\theta)=\Im \left(z^k\right)=r^k\sin(k\theta), \qquad \Re k<0,
\nonumber
\end{equation}
is also a solution of the Neumann boundary value problem with
\begin{equation}
 g_{\pm}(r)=\pm k r^k\sin(k\alpha), \qquad g(\theta)=k a^{k-1}\cos(k\theta).
\nonumber
\end{equation}
The above solution implies $F_3=F_4=0$.

\end{rem}

\begin{rem}
 
The first integral in the definition of $F_1$ involves a singularity at $\theta=-\varphi$. Making the change of variables $\theta=-\varphi+x$, the relevant singularity is mapped at $x=0$ and the associated integral is $\ln|x|$. This singularity can be handled by Cauchy principal value integrals. In this case the contribution of this singularity vanishes. Indeed, this contribution involves
\begin{equation}
 \lim_{\epsilon\to 0} \left( \int_{-\alpha}^{-\epsilon}+\int_{\epsilon}^{\alpha}\right)
\ln|x| dx=2\lim_{\epsilon\to 0}\int_{\epsilon}^{\alpha} \ln x dx
\nonumber
\end{equation}
and using integration by parts it follows that the contribution from $x=\epsilon$ vanishes.

If instead of the principal value integral we use the limit { from} above and below, the relevant contribution is 
\begin{equation}
 \lim_{\epsilon\to 0} i \int_0^\pi \ln(\epsilon) e^{i\theta} d\theta,
\nonumber
\end{equation}
which clearly does not exist.

\end{rem}

\section{Verification of the Four Identities}

In order to simplify the functions $\{F_j\}_1^4$ defined by \eqref{defF1}--\eqref{defF4}, we introduce the change of variables
\begin{align}
 &\frac{r}{a}=x^{-\frac{2\alpha}{\pi}}, \qquad \frac{\rho}{a}=y^{-\frac{2\alpha}{\pi}},
\nonumber\\
& k=-\frac{\pi}{2\alpha}\kappa,\qquad u=e^{\frac{i\pi\theta}{2\alpha}}, \qquad v=e^{\frac{i\pi\varphi}{2\alpha}}.
\label{changevar}
\end{align}
Then $\{F_j\}_1^4$ can be written as
\begin{align}
 F_1 &=-\kappa h_1(\kappa,v)+\kappa\sin\left(\frac{\pi\kappa}{2}\right)[h_3(\kappa,v)+h_3(\kappa,-v)]
-\frac{\pi}{2}(v^{\kappa}+v^{-\kappa}),\label{defF1h}\\
 F_2 &=-\kappa h_6(\kappa,y)+\kappa\sin\left(\frac{\pi\kappa}{2}\right)[h_4(\kappa,y)+h_5(\kappa,y)]
-\pi y^{\kappa}\cos\left(\frac{\pi\kappa}{2}\right),\label{defF2h}\\
 F_3 &=\kappa h_2(\kappa,v)+\kappa\cos\left(\frac{\pi\kappa}{2}\right)[h_3(\kappa,v)-h_3(\kappa,-v)]
+\frac{\pi}{2i}(v^{\kappa}-v^{-\kappa}),\label{defF3h}\\
F_4 &=-\kappa
h_7(\kappa,y)-\kappa\cos\left(\frac{\pi\kappa}{2}\right)[h_4(\kappa,y)-h_5(\kappa,y)]
-\pi y^{\kappa}\sin\left(\frac{\pi\kappa}{2}\right),\label{defF4h}
\end{align}
where the functions $\left\{h_j\right\}_1^7$ are defined by
\begin{align}
 h_1(\kappa,v)&=\int_{\gamma_1}\frac{u^{\kappa-1}+u^{-\kappa-1}}{2i}
\ln\left| \frac{v-v^{-1}}{2i}+\frac{u-u^{-1}}{2i}\right| du, \label{defh1}\\
h_2(\kappa,v) &=\int_{\gamma_1}\frac{u^{\kappa-1}-u^{-\kappa-1}}{2}
\ln\left| \frac{v-v^{-1}}{2i}+\frac{u-u^{-1}}{2i}\right| du, \label{defh2}\\
h_3(\kappa,v) &=\int_0^1 x^{\kappa-1}\left[ \ln\left|\frac{x+x^{-1}}{2}-\frac{v-v^{-1}}{2i}\right|
+\ln x+\ln2\right]dx, \label{defh3}\\
h_4(\kappa,y) &=\int_0^1 x^{\kappa-1}\left[ \ln\left|\frac{x+x^{-1}}{2}-\frac{y+y^{-1}}{2}\right|
+\ln x+\ln2\right]dx, \label{defh4}\\
h_5(\kappa,y) &=\int_0^1 x^{\kappa-1}\left[ \ln\left|\frac{x+x^{-1}}{2}+\frac{y+y^{-1}}{2}\right|
+\ln x+\ln2\right]dx, \label{defh5}\\
h_6(\kappa,y) &=\int_{\gamma_1}\frac{u^{\kappa-1}+u^{-\kappa-1}}{2i}
\ln\left| \frac{y+y^{-1}}{2}+\frac{u-u^{-1}}{2i}\right| du, \label{defh6}\\
h_7(\kappa,y) &=\int_{\gamma_1}\frac{u^{\kappa-1}-u^{-\kappa-1}}{2}
\ln\left| \frac{y+y^{-1}}{2}+\frac{u-u^{-1}}{2i}\right| du, \label{defh7}
\end{align}
and $\gamma_1$ is depicted in figure \ref{contour1}.

\medskip
In order to compute $\{h_j\}_1^7$, the following lemma will be
useful.

\begin{lem}
\label{lem4.1}
 Let $\gamma$ be a smooth curve from $z_1$ to $z_2$ in the $z-$complex plane which passes through $z_0$. Then
\begin{align}
I= PV\int_{\gamma}\frac{z^p}{z-z_0}dz=&
\frac{1}{(p+1)z_0}[z_1^{p+1}\;_2F_1(1,p+1;p+2;\frac{z_1}{z_0})
\nonumber\\
&-z_2^{p+1}\;_2F_1(1,p+1;p+2;\frac{z_2}{z_0})]-\pi i z_0^p, \qquad \Re p>-1.
\label{inthyper}
\end{align}
If the singularity lies off the path $\gamma$, then the last term, in  {(4.13)} is absent.
\end{lem}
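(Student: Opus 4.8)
The plan is to exhibit an explicit antiderivative of $z^{p}/(z-z_0)$ in terms of $\,{}_2F_1$, and then to account for the passage of $\gamma$ through the simple pole by a standard contour indentation.

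First I would dispose of the case $z_0\notin\gamma$. From the Pochhammer ratio $(p+1)_n/(p+2)_n=(p+1)/(n+p+1)$ one obtains
\begin{equation*}
{}_2F_1\!\left(1,p+1;p+2;w\right)=(p+1)\sum_{n=0}^{\infty}\frac{w^{\,n}}{n+p+1},\qquad |w|<1,
\end{equation*}
so that, setting $G(w)=\sum_{n\ge0}w^{\,n+p+1}/(n+p+1)$, we have $G'(w)=w^{p}/(1-w)$ and $G(w)=\tfrac{w^{p+1}}{p+1}\,{}_2F_1(1,p+1;p+2;w)$. Substituting $w=z/z_0$ then shows that
\begin{equation*}
F(z):=-\frac{z^{p+1}}{(p+1)z_0}\,{}_2F_1\!\left(1,p+1;p+2;\frac{z}{z_0}\right)
\end{equation*}
satisfies $F'(z)=z^{p}/(z-z_0)$ for $|z|<|z_0|$ and hence, by analytic continuation, on any simply connected region avoiding $z_0$ and the branch cut of $z^{p}$. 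Therefore $\int_{\gamma}z^{p}(z-z_0)^{-1}\,dz=F(z_2)-F(z_1)$, which is precisely \eqref{inthyper} with the term $-\pi i z_0^{p}$ removed; the hypothesis $\Re p>-1$ is what renders the hypergeometric parameters admissible and legitimises the term-by-term integration.

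Next I would restore the principal value. Replacing the arc of $\gamma$ inside the disc $|z-z_0|<\epsilon$ by a small semicircular detour $C_\epsilon$ around $z_0$, the deformed curve $\gamma_\epsilon$ misses $z_0$, so the previous step gives $\int_{\gamma_\epsilon}=F(z_2)-F(z_1)$, whereas by definition
\begin{equation*}
PV\int_{\gamma}\frac{z^{p}}{z-z_0}\,dz=\lim_{\epsilon\to0}\int_{\gamma_\epsilon}-\lim_{\epsilon\to0}\int_{C_\epsilon}.
\end{equation*}
Since $z^{p}/(z-z_0)$ has a simple pole at $z_0$ with residue $z_0^{p}$, the standard vanishing-arc estimate gives $\lim_{\epsilon\to0}\int_{C_\epsilon}=i\pi z_0^{p}$ once $C_\epsilon$ is oriented consistently with the chosen principal value (the detour subtends a half-turn about $z_0$, contributing $i\pi$ times the residue). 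Subtracting produces the extra term $-\pi i z_0^{p}$ and completes the proof; if $z_0$ lies off $\gamma$ there is no detour and this term is absent.

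The only genuinely delicate point here is bookkeeping rather than analysis: because $\,{}_2F_1(1,p+1;p+2;w)$ has a logarithmic branch point at $w=1$, i.e.\ at $z=z_0$, one must fix once and for all a branch of $z^{p+1}\,{}_2F_1(1,p+1;p+2;z/z_0)$ along the whole of $\gamma$, reading the closed form in \eqref{inthyper} as the analytic continuation of a series that converges only for $|z|<|z_0|$, and one must orient the indentation $C_\epsilon$ so that the half-monodromy it picks up is $-\pi i z_0^{p}$ and not $+\pi i z_0^{p}$. Apart from these conventions the argument reduces to routine manipulation of Pochhammer symbols and the usual estimate for integrals over shrinking circular arcs.
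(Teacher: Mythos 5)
Your proposal is correct and follows essentially the same route as the paper: both identify $-\frac{z^{p+1}}{(p+1)z_0}\,{}_2F_1\!\left(1,p+1;p+2;\frac{z}{z_0}\right)$ as an antiderivative of $\frac{z^p}{z-z_0}$ and then account for the principal value by the half-residue $-\pi i z_0^p$ at the pole on $\gamma$. The only (immaterial) difference is that you obtain the antiderivative from the hypergeometric series, whereas the paper uses the Euler integral representation $\frac{1}{p+1}\,{}_2F_1(1,p+1;p+2;x)=\int_0^1\frac{\rho^p}{1-\rho x}\,d\rho$; your explicit remarks on the branch of ${}_2F_1$ at $w=1$ and the orientation of the indentation make precise what the paper compresses into the phrase ``subtracting half the residue.''
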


\begin{proof}

Letting $x=z/z_0$, we find
\begin{equation}
I=z_0^p PV\int_{\gamma_0}\frac{x^p}{x-1}dx,
\nonumber
\end{equation}
where $\gamma_0$ runs from $z_1/z_0$ to $z_2/z_0$, avoids $x=0$ and passes through $x=1$.
We recall that
\begin{equation}
\frac{1}{p+1}\;_2F_1(1,p+1;p+2;x)=\int_0^1\frac{\rho^p}{1-\rho x}d\rho,\qquad
\Re p >-1.
\nonumber
\end{equation}
Letting $\rho x=v$, we find
\begin{equation}
\frac{x^{p+1}}{p+1}\;_2F_1(1,p+1;p+2;x)=\int_0^x\frac{v^p}{1-v}dv,
\nonumber
\end{equation}
so that
\begin{equation}
\frac{x^p}{x-1}=-\frac{1}{p+1}\frac{d}{dx}[x^{p+1}\;_2F_1(1,p+1;p+2;x)],
\end{equation}
which immediately gives \eqref{inthyper}, the principal part at $z_0$ being included by subtracting half the residue at this point.
\qedhere

\end{proof}

The hypergeometric functions in \eqref{inthyper} are related to the Lerch transcendent, a generalization of the Riemann Zeta function
\begin{equation}
\;_2F_1(1,p;p+1;z)=p\Phi(z,1,p);
\end{equation}
the Lerch transcendent is analytic in the $z$-plane for fixed $p\ne1$.

%
\begin{figure}[t!]
\kern\medskipamount
\centerline{
\includegraphics[width=0.315\textwidth]{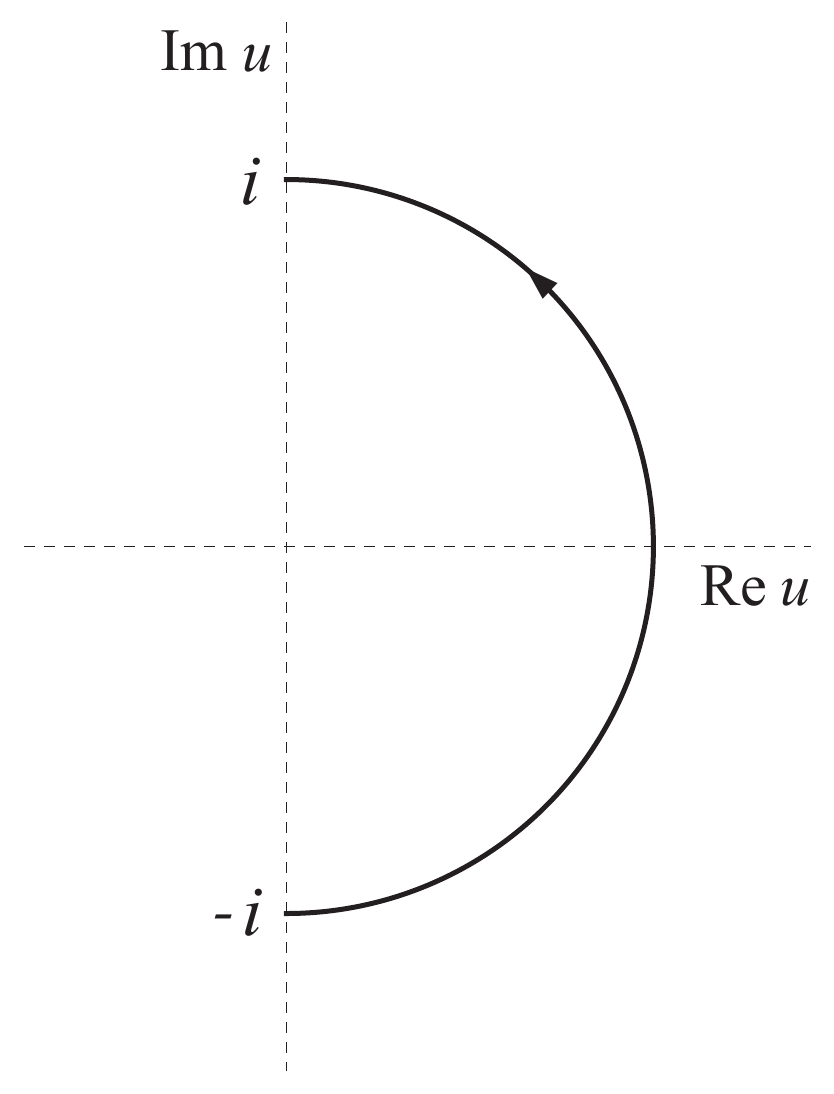}}
\caption{ The oriented contour $\gamma_1$. } \label{contour1}
\vskip3\medskipamount
\end{figure}
%

\begin{prop}
\label{prop4.1}
For $\Re\kappa >-1$,
\begin{equation}
 F_1(\varphi,k)=F_2(\rho,k)=-\sin(\alpha k)\left[2\ln2+\frac{\pi}{\alpha k}\right].
\label{F1F2}
\end{equation}
\end{prop}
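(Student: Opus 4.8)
I would prove \eqref{F1F2} first on the half-plane $\Re k<0$ and then continue it analytically. The key observation is that, by \eqref{defF1} and \eqref{defF2}, the functions $F_1$ and $F_2$ differ from the functions $\tilde F_1,\tilde F_2$ of Remark \ref{rem3.2} only through the single term $-k\sin(\alpha k)\int_a^\infty(r/a)^kG(r)\,\frac{dr}{r}$. For $\Re k<0$ this term is elementary: with $s=\ln(r/a)$,
\[
\int_a^\infty\Big(\frac ra\Big)^kG(r)\,\frac{dr}{r}=\int_0^\infty e^{ks}\Big[\frac\pi\alpha s-\ln4\Big]ds=\frac{\pi}{\alpha k^2}+\frac{\ln4}{k},
\]
so that $-k\sin(\alpha k)\int_a^\infty(r/a)^kG(r)\,\frac{dr}{r}=-\sin(\alpha k)\big[2\ln2+\tfrac{\pi}{\alpha k}\big]$, which is the right-hand side of \eqref{F1F2} (compare the Remark on $\Re k<0$ preceding Remark \ref{rem3.2}). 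Since $\tilde F_1=\tilde F_2=0$ for $\Re k<0$ by Remark \ref{rem3.2}, \eqref{F1F2} holds on $\{\Re k<0\}$. To reach the full range $\Re\kappa>-1$, i.e. $\Re k<\tfrac{\pi}{2\alpha}$, one checks that for each fixed $\varphi\in(-\alpha,\alpha)$ and $\rho\in(a,\infty)$ the functions $F_1(\varphi,\cdot),F_2(\rho,\cdot)$ are analytic in $k$ on $\{\Re k<\tfrac{\pi}{2\alpha}\}$ — the only delicate contribution being the $G$-subtracted integrals, well defined there by the estimates quoted in that same Remark — while $-\sin(\alpha k)[2\ln2+\tfrac{\pi}{\alpha k}]$ is entire (the apparent pole at $k=0$ is removable). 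The identity theorem then upgrades \eqref{F1F2} to all of $\{\Re k<\tfrac{\pi}{2\alpha}\}$.

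This argument leans on the consistency reasoning of Section \ref{s:3}. A self-contained verification — which is what Lemma \ref{lem4.1} and the functions $\{h_j\}_1^7$ are tailored for — proceeds instead as follows. Using the change of variables \eqref{changevar}, bring $F_1,F_2$ to the forms \eqref{defF1h}, \eqref{defF2h}. Each logarithm in \eqref{defh1}--\eqref{defh7} factorises into linear pieces: from $x^2-2x\sin\psi+1=(x-ie^{-i\psi})(x+ie^{i\psi})$ one has, for $x>0$,
\[
\ln\Big|\frac{x+x^{-1}}{2}-\sin\psi\Big|+\ln x+\ln2=\ln|x-ie^{-i\psi}|+\ln|x+ie^{i\psi}|,
\]
so $h_3(\kappa,v)=\int_0^1x^{\kappa-1}\big[\ln|x-w|+\ln|x-w^{-1}|\big]dx$ with $v=e^{i\psi}$, $w=ie^{-i\psi}$; the substitutions $v\mapsto iy$ and $v\mapsto-iy$ then give $h_4(\kappa,y)=h_3(\kappa,iy)$, $h_5(\kappa,y)=h_3(\kappa,-iy)$. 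Likewise $\tfrac{u-u^{-1}}{2i}+\tfrac{v-v^{-1}}{2i}=\tfrac{(u+v)(uv-1)}{2iuv}$ turns the integrand of $h_1$ into $\ln|u+v|+\ln|uv-1|-\ln2$ (the terms $\ln|u|,\ln|v|$ dropping because $\gamma_1$ and $v$ lie on the unit circle), with $h_6(\kappa,y)=h_1(\kappa,iy)$, $h_7(\kappa,y)=h_2(\kappa,iy)$. Thus it suffices to evaluate $h_1,h_2,h_3$ as functions of a single complex parameter.

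For each resulting integral I would write $\ln|z-w|=\tfrac12(\ln(z-w)+\ln(z-\bar w))$, integrate by parts via $z^{\kappa-1}=\kappa^{-1}(z^\kappa)'$, and reduce to integrals $\int z^\kappa/(z-w)\,dz$ covered directly by Lemma \ref{lem4.1}; the boundary terms are elementary, and the endpoint $z=0$ contributes nothing for $\Re\kappa>0$, the range $-1<\Re\kappa\le0$ following by analytic continuation. Along $\gamma_1$ the same works, but the singular point $u=\bar v$ — the point $\theta=-\varphi$ discussed in the final Remark of Section \ref{s:3} — lies on the contour, so Lemma \ref{lem4.1} supplies the half-residue term $-\pi iz_0^\kappa$; tracking these principal-value pieces is essential. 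This gives $h_1,\dots,h_7$ explicitly as linear combinations of ${}_2F_1(1,\kappa+1;\kappa+2;\cdot)$, equivalently of the Lerch transcendent, plus elementary terms. Substituting into \eqref{defF1h}, \eqref{defF2h}, the hypergeometric part collapses: the $w\leftrightarrow w^{-1}$ pairing inside each $h_j$ combined with the contiguous/connection formulae for these ${}_2F_1$, the $v\mapsto-v$ symmetrisations in \eqref{defF1h}--\eqref{defF2h}, and the cancellation of the half-residue terms against the remaining hypergeometric values together leave only elementary functions; the $\ln2$ in $\ln|uv-1|-\ln2$ on $\gamma_1$ produces the $2\ln2$, and the accumulated boundary terms give the $\tfrac{\pi}{\alpha k}$. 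Using $\sin(\alpha k)=-\sin(\tfrac{\pi\kappa}{2})$ and $\tfrac{\pi}{\alpha k}=-\tfrac2\kappa$ one recovers \eqref{F1F2}, the computation for $F_2$ paralleling that for $F_1$ via $v\leftrightarrow iy$ (up to a branch correction as $v$ leaves the unit circle, which alters the elementary prefactor $v^\kappa+v^{-\kappa}$ versus $2y^\kappa\cos\tfrac{\pi\kappa}{2}$ but not the constant value). The main obstacle is precisely this last step: organising the many ${}_2F_1$ terms — in particular matching the principal-value half-residues of Lemma \ref{lem4.1} against the hypergeometric values — so that the collapse to the elementary answer is transparent rather than miraculous.
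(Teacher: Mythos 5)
Your proposal contains two distinct arguments, and they fare differently against the paper's actual proof.

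The first route (prove \eqref{F1F2} for $\Re k<0$ and continue analytically) is genuinely different from what the paper does. The elementary evaluation $\int_a^\infty(r/a)^kG(r)\,\frac{dr}{r}=\frac{\pi}{\alpha k^2}+\frac{\ln 4}{k}$ is correct, and combined with $\tilde F_1=\tilde F_2=0$ it does yield the right-hand side of \eqref{F1F2}; the continuation to $\Re k<\frac{\pi}{2\alpha}$ is unproblematic since the $G$-subtracted integrals converge locally uniformly there. The weakness is that this route inherits whatever is unproven in Remark \ref{rem3.2}: that remark obtains $\tilde F_1=\tilde F_2=0$ by substituting the particular solution $q=\Re(z^k)$ into the Dirichlet-value formulas \eqref{DBV1}, \eqref{DBV2}, which presupposes that the integral representation \eqref{sol_Lap} with the Neumann data \eqref{BCgpm} returns exactly $r^k\cos(k\theta)$ and not that function plus something in the kernel of the Neumann problem (a constant, say, or worse in this unbounded domain). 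Theorem \ref{thm2.1} only asserts ``a solution,'' so closing this requires a uniqueness/normalization statement (zero Neumann data plus decay at infinity forces the difference to vanish) that is nowhere established. The whole point of Section 4 is to verify the identities by a computation independent of the boundary value problem; your first route cannot play that role.

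Your second, ``self-contained'' argument is in outline exactly the paper's proof: pass to the $h_j$ via \eqref{changevar}, integrate by parts and partial-fraction, apply Lemma \ref{lem4.1} keeping the half-residue terms for singularities lying on $\gamma_1$ (resp.\ $\gamma_2$), and watch the hypergeometric terms cancel. Your observations $h_4(\kappa,y)=h_3(\kappa,iy)$, $h_5(\kappa,y)=h_3(\kappa,-iy)$, $h_6(\kappa,y)=h_1(\kappa,iy)$, $h_7(\kappa,y)=h_2(\kappa,iy)$ are correct and would actually streamline the paper's presentation, which computes $h_4,h_5,h_6$ from scratch. However, the step you defer as ``the main obstacle'' is the entire content of the proof: the paper's equations \eqref{h1inta}--\eqref{h3int} and \eqref{h6inta}--\eqref{h5int} carry out precisely that bookkeeping --- eight $\tilde F(\kappa\pm1;\cdot)$ terms per function, the half-residue contributions assembling into $\frac{\pi}{2}(v^{\kappa}+v^{-\kappa})$ and $\pi y^{\kappa}\cos(\frac{\pi\kappa}{2})$ so as to cancel the $-\pi\cos(k\varphi)$ and $-\pi\cos(\alpha k)(\rho/a)^k$ terms of \eqref{defF1} and \eqref{defF2}, and the logarithms combining to leave only $-\sin(\alpha k)[2\ln 2+\frac{\pi}{\alpha k}]$. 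You assert that this collapse occurs but do not exhibit it, so as a verification the proposal stops short at the decisive point.
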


\begin{proof}
If $\kappa=0$ (i.e. $k=0$), then $F_1=F_2=-\pi$, which {agrees} with \eqref{F1F2} in the limit $k\to 0$.
Thus, we consider the case of $\kappa\ne 0$.
Using integration by parts and a partial fraction decomposition, we find
\begin{align}
 h_1 (\kappa,v)=&\frac{1}{\kappa}\sin\left(\frac{\pi\kappa}{2}\right)\left[
\ln\left|\frac{v-v^{-1}}{2i}+1\right|+\ln\left|\frac{v-v^{-1}}{2i}-1\right|\right] 
\nonumber\\
&+\frac{1}{2i\kappa(v+v^{-1})}\,PV\int_{\gamma_1} \big(u^\kappa-u^{-\kappa}\big)\big(u+u^{-1}\big)
\left(\frac1{u+v}-\frac1{u-v^{-1}}\right) du,
\label{h1inta}
\end{align}
where the contour $\gamma_1$ is depicted in figure
\ref{contour1} and we have used
\begin{equation}
(i)^\kappa-(-i)^{\kappa}=2i\sin\frac{\pi\kappa}2.
\nonumber
\end{equation} Using the
change of variables $u\to 1/u$ for the term involving $u^{-\kappa}$,
the above principal value integral becomes
\begin{equation}
 \frac{1}{2i\kappa(v+v^{-1})} \,PV\int_{\gamma_1} u^\kappa\big(u+u^{-1}\big)
\left(\frac1{u+v}-\frac1{u-v^{-1}}-\frac1{u-v}+\frac1{u+v^{-1}}\right) du.
\label{h1intb}
\end{equation}
Each integral in \eqref{h1intb} can be computed by using lemma \ref{lem4.1}.
Noting that $-\alpha<\varphi<\alpha$, the definition of $v$ (see \eqref{changevar})
implies that $v$ and $v^{-1}$ lie on the contour $\gamma_1$.
The relevant residue contributions yield the term
\begin{equation}
 \frac{\pi}2 (v^{\kappa}+v^{-\kappa}).
\end{equation}
Hence, we find
\begin{align}
\label{h1intc}
\frac{1}{2i\kappa(v+v^{-1})}& \,PV\int_{\gamma_1} u^\kappa\big(u+u^{-1}\big)
\left(\frac1{u+v}-\frac1{u-v^{-1}}-\frac1{u-v}+\frac1{u+v^{-1}}\right) du
\nonumber\\
&= -\frac{1}{\kappa(v+v^{-1})}\sin\left(\frac{\pi\kappa}2\right)\left[ \frac1v \tilde F(\kappa+1;-i/v)
+ \frac1v \tilde F(\kappa+1;i/v)
+v\tilde F(\kappa+1;-iv) \right.
\nonumber\\
&\left. +v\tilde F(\kappa+1;iv) \right.
\left. -\frac1v \tilde F(\kappa-1;-i/v)- \frac1v \tilde F(\kappa-1;i/v)
-v\tilde F(\kappa-1;-iv)-v\tilde F(\kappa-1;iv)\right]
\nonumber\\
&-\frac{\pi}{2\kappa} (v^{\kappa}+v^{-\kappa}),
\end{align}
where
\begin{equation}
 \tilde F(\kappa;u_0)=\frac1{\kappa+1}\;_2F_1(1,\kappa+1;\kappa+2;u_0).
\end{equation}

Regarding $h_3(\kappa,v)$, using integration by parts and a partial fraction decomposition, we find
\begin{align}
 h_3(\kappa,v)=&\frac{1}{\kappa}\left\{
\ln\left|\frac{v-v^{-1}}{2i}-1\right|-\frac{1}{\kappa}+\ln 2\right.
\nonumber\\
&\left. -\frac{i}{v+v^{-1}}\int_0^1 x^\kappa(x-x^{-1})\left(
\frac1{x+iv}-\frac1{x-iv^{-1}}\right) dx\right\}.
\label{h3inta}
\end{align}
Each integral of the second line in \eqref{h3inta} can be computed using lemma \ref{lem4.1}.
The definition of $v$ implies that the integrand in \eqref{h3inta} does not have any singularities on $[0,1]$. Hence, we find
\begin{align}
 h_3(\kappa,v)=&\frac{1}{\kappa}\left\{
\ln\left|\frac{v-v^{-1}}{2i}-1\right|-\frac{1}{\kappa}+\ln 2 -\frac{1}{v+v^{-1}}\bigg[ \frac1v \tilde F(\kappa+1;i/v)-\frac1v\tilde F(\kappa-1;i/v)\right.
\nonumber\\
&\left.
+v\tilde F(\kappa+1;-iv)-v\tilde F(\kappa-1;-iv)\bigg] \right\}.
\label{h3int}
\end{align}

Substituting \eqref{h1intc} into \eqref{h1inta} and using
\eqref{h3int}, we find that the terms involving the
hypergeometric functions and the logarithmic functions cancel.
Hence, recalling $\kappa=-\frac{2\alpha}{\pi}k$, we obtain
\eqref{F1F2} for $F_1$.

%
\begin{figure}[t!]
\kern\medskipamount \centerline{
\includegraphics[width=0.255\textwidth]{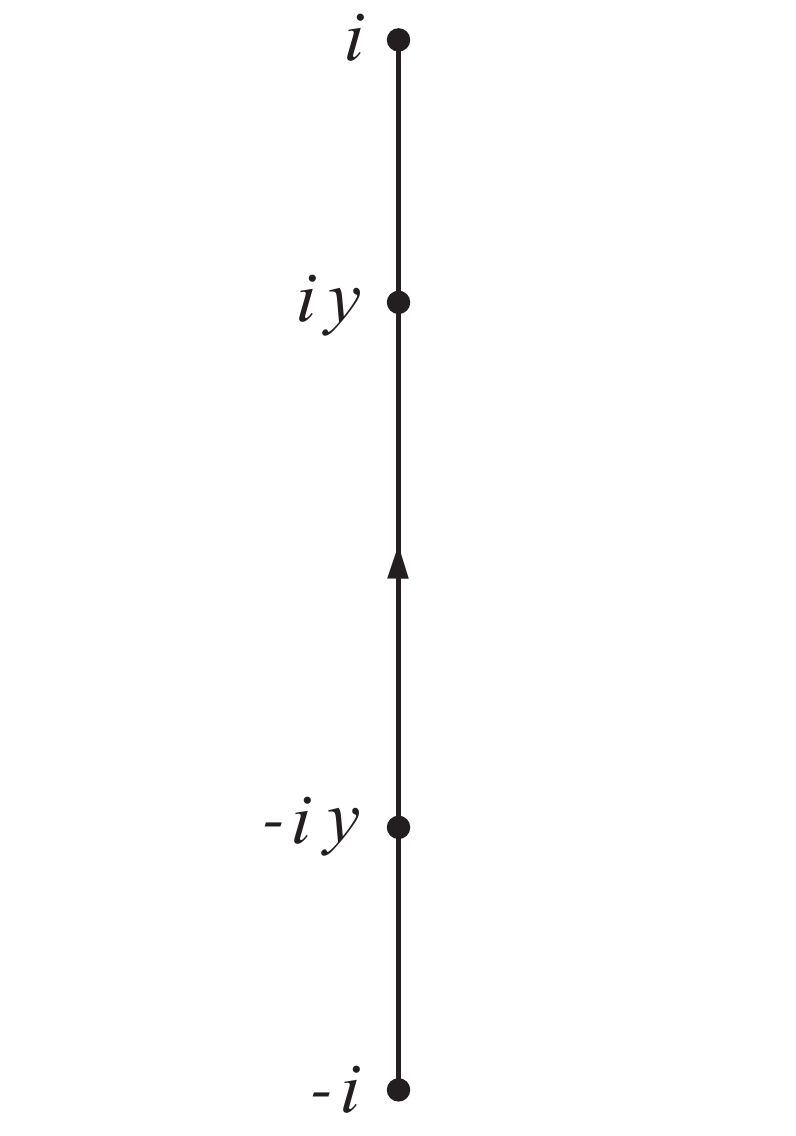}}
\caption{ The deformed contour 
$\gamma_2$. } \label{contour2} \vskip3\medskipamount
\end{figure}
%

Regarding $F_2$, we first deform the contour $\gamma_1$ to
$\gamma_2$, which is the segment $(-i,i)$ of the imaginary axis, see figure \ref{contour2}. Then,
using integration by parts and a partial fraction decomposition,
we find
\begin{align}
h_6(\kappa,y)=&\frac{1}{\kappa}\sin\left(\frac{\pi\kappa}{2}\right)\left[\ln\left|\frac12 (y+y^{-1})+1\right|
+\ln\left|\frac12 (y+y^{-1})-1\right| +i\pi y^{\kappa}\right]
\nonumber\\
&-\frac{1}{2\kappa(y-y^{-1})}\,PV\int_{\gamma_2}u^\kappa(u+u^{-1})
\left(\frac{1}{u+iy}-\frac{1}{u+iy^{-1}}-\frac{1}{u-iy}+\frac{1}{u-iy^{-1}}\right) du.
\label{h6inta}
\end{align}
Using the fact that the integrand of the above principal value integral has poles at $x=\pm iy$
and employing lemma \ref{lem4.1}, we find that the principal value integral of the RHS of \eqref{h6inta}
is given by
\begin{align}
 &\frac{1}{\kappa(y-y^{-1})}\sin\left(\frac{\pi\kappa}{2}\right)
\left[\frac1{y}\tilde F(\kappa+1,-1/y)-y\tilde F(\kappa+1,-y)+\frac1{y}\tilde F(\kappa+1,1/y)
-y\tilde F(\kappa+1,y)\right.
\nonumber\\
& \left.-\frac1{y}\tilde F(\kappa-1,-1/y)+y\tilde F(\kappa-1,-y)-\frac1{y}\tilde F(\kappa-1,1/y)
+y\tilde F(\kappa-1,y)\right]-\frac{\pi}{\kappa} y^{\kappa}\cos\left(\frac{\pi\kappa}{2}\right).
\label{h6intc}
\end{align}
For $h_4(\kappa,y)$ and $h_5(\kappa,y)$, using the fact that the integrands have a pole at $x=y$, we find
\begin{align}
 h_4(\kappa,v)=&\frac{1}{\kappa}\left\{
\ln\left|\frac{y+y^{-1}}{2}-1\right|-\frac{1}{\kappa}+\ln 2 +\frac{1}{y-y^{-1}}
\bigg[ \frac1y \tilde F(\kappa+1;1/y)-\frac1y\tilde F(\kappa-1;1/y)\right.
\nonumber\\
&\left.
-y\tilde F(\kappa+1;y)+y\tilde F(\kappa-1;y)\bigg] +i\pi y^{\kappa}\right\}
\label{h4int}
\end{align}
and
\begin{align}
 h_5(\kappa,v)=&\frac{1}{\kappa}\left\{
\ln\left|\frac{y+y^{-1}}{2}+1\right|-\frac{1}{\kappa}+\ln 2 +\frac{1}{y-y^{-1}}
\bigg[ \frac1y \tilde F(\kappa+1;-1/y)-\frac1y\tilde F(\kappa-1;-1/y)\right.
\nonumber\\
&\left.
-y\tilde F(\kappa+1;-y)+y\tilde F(\kappa-1;-y)\bigg] \right\}.
\label{h5int}
\end{align}

Combining \eqref{h6inta}, \eqref{h4int} and \eqref{h5int} with
\eqref{h6intc}, we find that the terms involving the
hypergeometric functions and the logarithmic functions cancel and
hence we find \eqref{F1F2} for $F_2$. \qedhere

\end{proof}

\begin{prop}
 \label{prop4.2}
For $\Re\kappa>-1$,
\begin{equation}
 F_3(\varphi,k)=F_4(\rho,k)=0.
\label{F2F3}
\end{equation}
\end{prop}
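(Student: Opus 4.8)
The plan is to mirror, essentially line for line, the proof of Proposition~\ref{prop4.1}, exploiting the fact that $F_3$ and $F_4$ are assembled from the very same auxiliary integrals, but now in their \emph{antisymmetric} combinations. First I would dispose of the trivial case: when $\kappa=0$ (i.e. $k=0$) every term in \eqref{defF3h} and \eqref{defF4h} carries a factor $\kappa$, or a factor $v^{\kappa}-v^{-\kappa}$, or a factor $\sin(\pi\kappa/2)$, each of which vanishes; hence $F_3=F_4=0$ there, and one may assume $\kappa\neq0$.

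For $F_3$ the only genuinely new object is $h_2(\kappa,v)$. I would integrate \eqref{defh2} by parts using the antiderivative $\tfrac{1}{2\kappa}(u^{\kappa}+u^{-\kappa})$ of $\tfrac12(u^{\kappa-1}-u^{-\kappa-1})$. Since this antiderivative takes the \emph{same} value $\cos(\pi\kappa/2)/\kappa$ at both endpoints $u=\pm i$ of $\gamma_1$, the boundary contribution is $\tfrac{\cos(\pi\kappa/2)}{\kappa}\big[\ln\big|\tfrac{v-v^{-1}}{2i}+1\big|-\ln\big|\tfrac{v-v^{-1}}{2i}-1\big|\big]$. Using the identity $\tfrac{v-v^{-1}}{2i}+\tfrac{u-u^{-1}}{2i}=\tfrac{(u+v)(uv-1)}{2i\,uv}$ (which is real on $\gamma_1$), a partial-fraction decomposition, the substitution $u\mapsto1/u$ on the $u^{-\kappa}$ piece, and Lemma~\ref{lem4.1}, the remaining principal-value integral is expressed through $\tilde F(\kappa\pm1;\pm iv)$ and $\tilde F(\kappa\pm1;\pm i/v)$ together with a residue contribution proportional to $v^{\kappa}-v^{-\kappa}$ (the poles at $u=v$ and $u=v^{-1}$ lie on $\gamma_1$ because $|\varphi|<\alpha$, exactly as in the $h_1$ computation). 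I would then take the already-derived formula \eqref{h3int} for $h_3(\kappa,v)$ together with its image under $v\mapsto-v$: in the \emph{difference} $h_3(\kappa,v)-h_3(\kappa,-v)$ the $v$-independent terms $-1/\kappa^{2}+\ln2$ cancel --- which is precisely why $F_3$, unlike $F_1$, produces no leftover $f(k)$ --- while the logarithmic and hypergeometric terms reproduce, with signs, those coming from $\kappa h_2$. Substituting everything into \eqref{defF3h} and checking that the residue term cancels the explicit $\tfrac{\pi}{2i}(v^{\kappa}-v^{-\kappa})$ yields $F_3(\varphi,k)=0$.

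For $F_4$ I would follow the treatment of $F_2$: first deform $\gamma_1$ to the imaginary segment $\gamma_2=(-i,i)$ of Figure~\ref{contour2}, then integrate \eqref{defh7} by parts with the same antiderivative $\tfrac{1}{2\kappa}(u^{\kappa}+u^{-\kappa})$. One obtains the $F_4$-analogue of \eqref{h6inta}--\eqref{h6intc}: boundary logarithms $\ln\big|\tfrac12(y+y^{-1})\pm1\big|$, a $y^{\kappa}$-term, and a principal-value integral whose integrand has poles at $u=\pm iy\in\gamma_2$ (recall $0<y<1$ since $\rho>a$), evaluated by Lemma~\ref{lem4.1} in terms of $\tilde F(\kappa\pm1;\pm y)$, $\tilde F(\kappa\pm1;\pm1/y)$ and a term proportional to $y^{\kappa}\sin(\pi\kappa/2)$. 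Combining this with the known formulas \eqref{h4int}, \eqref{h5int} for $h_4$ and $h_5$ --- again the $v$-independent pieces $-1/\kappa^{2}+\ln2$ drop out of $h_4-h_5$ --- all logarithms, all hypergeometric terms, the $i\pi y^{\kappa}$ pieces and the explicit $-\pi y^{\kappa}\sin(\pi\kappa/2)$ term of \eqref{defF4h} cancel pairwise, yielding $F_4(\rho,k)=0$.

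The main obstacle is bookkeeping rather than conceptual: tracking which of the poles $\pm iv,\pm i/v$ (respectively $\pm iy,\pm i/y$) actually lies on the relevant contour --- hence whether the $-\pi i z_0^{p}$ term of Lemma~\ref{lem4.1} is present --- and verifying that the numerous $\tilde F(\kappa\pm1;\cdot)$ contributions, the logarithms $\ln\big|\tfrac{v-v^{-1}}{2i}\pm1\big|$, and the elementary $v^{\pm\kappa}$ and $y^{\kappa}$ terms cancel \emph{exactly}, factors of $i$ included. One should also be careful with the principal-value prescription at $\theta=-\varphi$ (i.e. $u=v^{-1}$) in the first integral of $h_2$, handled exactly as in the remark following Remark~\ref{rem3.2}. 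As a consistency check, on the smaller half-plane $\Re k<0$ Remark~\ref{rem3.2} already delivers $F_3=F_4=0$ directly from the explicit harmonic function $q=\Im(z^{k})=r^{k}\sin(k\theta)$, and the computation above extends this conclusion to the full range $\Re\kappa>-1$.
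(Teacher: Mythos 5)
Your proof follows essentially the same route as the paper's: integration by parts on $h_2$ and on $h_7$ (after deforming $\gamma_1$ to $\gamma_2$), partial fractions, evaluation of the principal-value integrals via Lemma~\ref{lem4.1} including the residue terms from the on-contour poles, and cancellation against the formulas \eqref{h3int}, \eqref{h4int}, \eqref{h5int}. Your added observations --- that the $v$-independent constants drop out of the antisymmetric combinations (explaining why no $f(k)$ survives) and the consistency check against Remark~\ref{rem3.2} for $\Re k<0$ --- are correct but do not change the argument.
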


\begin{proof}
If $\kappa=0$, it is obvious that $F_3=F_4=0$. Thus, we consider
the case $\kappa\ne 0$. Using integration by parts and a partial
fraction decomposition, $h_2(\kappa,v)$ can be written as
\begin{align}
 h_2(\kappa,v)=&\frac{1}{\kappa}\cos\left(\frac{\pi\kappa}{2}\right)\left[
\ln\left|\frac{v-v^{-1}}{2i}+1\right|-\ln\left|\frac{v-v^{-1}}{2i}-1\right|\right]
\nonumber\\
& +\frac{1}{2\kappa(v+v^{-1})}\,PV\int_{\gamma_1} u^\kappa\big(u+u^{-1}\big)
\left(\frac1{u+v}-\frac1{u-v^{-1}}+\frac1{u-v}-\frac1{u+v^{-1}}\right) du ,
\label{h2int}
\end{align}
where we have used 
\begin{equation}
(i)^\kappa+(-i)^{\kappa}=2\cos\frac{\pi\kappa}2.
\nonumber
\end{equation}
The principal value integral in \eqref{h2int} can be evaluated by lemma \ref{lem4.1}:
\begin{align}
\label{h2inta}
 -\frac{1}{\kappa(v+v^{-1})}&\cos\left(\frac{\pi\kappa}2\right)\left[ \frac1v \tilde F(\kappa+1;-i/v)
- \frac1v \tilde F(\kappa+1;i/v)
-v\tilde F(\kappa+1;-iv)+v\tilde F(\kappa+1;iv) \right.
\nonumber\\
&\left. -\frac1v \tilde F(\kappa-1;-i/v)+\frac1v \tilde F(\kappa-1;i/v)
+v\tilde F(\kappa-1;-iv)-v\tilde F(\kappa-1;iv)\right]
\nonumber\\
&-\frac{\pi}{2i\kappa} (v^{\kappa}-v^{-\kappa}).
\end{align}
Substituting \eqref{h2inta} into \eqref{h2int} and combining the resulting expression
with \eqref{h3int}, we find $F_3=0$.

Regarding $F_4$, deforming the contour $\gamma_1$ to
$\gamma_2$ and then using integration by parts we find
\begin{align}
 h_7(\kappa,y)=&\frac{1}{\kappa}\cos\left(\frac{\pi\kappa}{2}\right)\left[
\ln\left|\frac12(y+y^{-1})+1\right|-\ln\left|\frac12(y+y^{-1})-1\right|-i\pi y^{\kappa}\right]
\nonumber\\
&+\frac{1}{2i\kappa(y-y^{-1})}\,PV\int_{\gamma_2} u^\kappa(u+u^{-1})
\left(\frac{1}{u+iy}-\frac{1}{u+iy^{-1}}+\frac{1}{u-iy}-\frac{1}{u-iy^{-1}}\right) du .
\label{h7int}
\end{align}
Using the fact that the integrand of the above principal value integral has poles at $x=\pm iy$ and
employing lemma \ref{lem4.1}, we find that the principal value integral of the RHS of \eqref{h7int} 
is given by
\begin{align}
 -&\frac{1}{\kappa(y-y^{-1})}\cos\left(\frac{\pi\kappa}{2}\right)
\left[\frac1{y}\tilde F(\kappa+1,-1/y)-y\tilde F(\kappa+1,-y)-\frac1{y}\tilde F(\kappa+1,1/y)
+y\tilde F(\kappa+1,y)\right.
\nonumber\\
& \left.-\frac1{y}\tilde F(\kappa-1,-1/y)+y\tilde F(\kappa-1,-y)+\frac1{y}\tilde F(\kappa-1,1/y)
-y\tilde F(\kappa-1,y)\right]-\frac{\pi}{\kappa} y^{\kappa}\sin\left(\frac{\pi\kappa}{2}\right).
\label{h7intc}
\end{align}
Using \eqref{h4int} and \eqref{h5int} together with \eqref{h7intc}, we find $F_4=0$.
\qedhere

\end{proof}

\section{Functional Identities}

The functions $h_3$, $h_5$, $h_6$ and $h_7$ do not involve principal value integrals, which always yields a term involving $i$, thus for $k$ real, it should be possible to express these functions in terms of real functions. We have succeeded in doing this for the last three functions but not for $h_3$. The relevant expressions are
\begin{align}
h_5(k,y)=&\frac{1}{k}\bigg\{\ln\bigg|\frac12(y+y^{-1})+1\bigg|-\frac1k+\ln2
\nonumber\\
&+\frac{1}{k(y^{-2}-1)}\big[y^{-2}\;_2F_1(1,k;k+1;-y^{-1})-\;_2F_1(1,k;k+1;-y)\big]
\nonumber\\
&-\frac{1}{(k+2)(y^{-2}-1)}\big[y^{-2}\;_2F_1(1,k+2;k+3;-y^{-1})-\;_2F_1(1,k+2;k+3;-y)\big]\bigg\},
\label{h5expr}\\
h_6(k,y)=&\frac{\sin\big(\frac{\pi k}{2}\big)}{k(y-y^{-1})^2}\bigg\{(y-y^{-1})^2
\ln\bigg|\frac12(y-y^{-1})\bigg|
-\frac{1}{k-2}\;_3F_2\bigg(\frac{1}{2},1,1;2-\frac{k}{2},\frac{k}{2};-\frac4{(y-y^{-1})^2}\bigg)
\nonumber\\
&+\frac{1}{k+2}\;_3F_2\bigg(\frac{1}{2},1,1;-\frac{k}{2},2+\frac{k}{2};-\frac4{(y-y^{-1})^2}\bigg)\bigg\},
\label{h6expr}
\\
h_7(k,y)=&-\frac{2}{k}\cos\bigg(\frac{\pi k}{2}\bigg)\bigg\{\frac12\bigg[\ln\bigg|\frac{y+y^{-1}}2-1\bigg|-\ln\bigg|\frac{y+y^{-1}}2+1\bigg|\bigg]
\nonumber\\
&+\frac{y+y^{-1}}{(y-y^{-1})^2}\bigg[\frac{1}{k+1} {\;_3F_2\bigg(\frac{1}{2},1, 1;\frac{1-k}{2},\frac{3+k}{2};
-\frac4{(y-y^{-1})^2}\bigg)}
\nonumber\\
&-\frac{1}{k-1}\;_3F_2\bigg(\frac{1}{2},1,1;\frac{3-k}{2},\frac{1+k}{2};-\frac4{(y-y^{-1})^2}
\bigg)\bigg]\bigg\}, \qquad \Re k >-1, \quad \ln y <0.
\label{h7expr}
\end{align}
In addition, for $k$ real, the functions $h_1$ and $h_2$ possess the following more complicated real forms:
\begin{align}
h_1(k,v)=&\frac{1}{k}\bigg\{2\sin\bigg(\frac{\pi k}{2}\bigg)
\ln\bigg|\frac{v+v^{-1}}2\bigg|+\frac{\pi^{3/2}}{(v+v^{-1})^2}\bigg[G_{33}^{21}
\left(\frac{(v+v^{-1})^2}4\bigg|\begin{array}{ccc}
1&-\frac{k}{2}&2+\frac{k}{2}\\
1&1&\frac{1}{2}\end{array}\right)
\nonumber\\
& -G_{33}^{21}\left(\frac{(v+v^{-1})^2}4\bigg|\begin{array}{ccc}
1&2-\frac{k}{2}&\frac{k}{2}\\
1&1&\frac{1}{2}\end{array}\right)\bigg]\bigg\}
\label{h1expr}\\
\noalign{\noindent and}
h_2(k,v)=&-\frac{1}{k}\bigg\{\cos\bigg(\frac{\pi k}{2}\bigg)\bigg[
\ln\left|\frac{v-v^{-1}}{2i}-1 \right|-\ln\left|\frac{v-v^{-1}}{2i}+1 \right| \bigg]
\nonumber\\
&-\frac{2\pi^{3/2}(v-v^{-1})}{i(v+v^{-1})} \bigg[G_{33}^{21}\left(\frac{(v+v^{-1})^2}4\bigg|
\begin{array}{ccc}
1&\frac{1-k}{2}&\frac{3+k}{2}\\
1&1&\frac{1}{2}\end{array}\right)
\nonumber\\
& +G_{33}^{21}\left(\frac{(v+v^{-1})^2}4\bigg|\begin{array}{ccc}
1&\frac{3-k}{2}&\frac{1+k}{2}\\
1&1&\frac{1}{2}\end{array}\right)\bigg]\bigg\},
\label{h2exprG}
\end{align}
where $G^{mn}_{pq}$ denotes the Meijer G-function.

The basic identities \eqref{F1F2eq} and \eqref{F3F4eq} yield a plethora of novel identities involving
the hypergeometric and related functions.

\medskip
\begin{exmp}
 
Replacing in the definition \eqref{defF4h} of $F_4$, $h_7$ by \eqref{h7expr}, $h_4$ by \eqref{h4int}
and $h_5$ by \eqref{h5expr}, the identity $F_4=0$ yields: 
\begin{align}
&{\frac{2}{(k+1)(y-y^{-1})^2} \;_3F_2\bigg(\frac{1}{2},1,1;\frac{1-k}{2},\frac{3+k}{2};-\frac{4}{(y-y^{-1})^2}\bigg)
+(k\rightarrow -k)}
\nonumber\\
&\kern1em={\frac{\pi y^{k}}{y+y^{-1}}\bigg(i+\tan\bigg(\frac{\pi k}{2}\bigg)\bigg)
+\frac{y}{y^2-y^{-2}}\bigg\{
\frac{1}{k}\bigg[\;_2F_1(1,k;1+k;y)-\;_2F_1(1,k;1+k;-y)\bigg]}
\nonumber\\
&\kern2em {-\frac{1}{k+2}\bigg[\;_2F_1(1,k+2;k+3;y)-\;_2F_1(1,k+2;k+3;-y)\bigg]\bigg\}}
\nonumber\\
&\kern2em +(y\rightarrow y^{-1}), \kern18em \Re k>-1,\qquad \ln y<0.
\label{ident1}
\end{align}
The $i$ appears above { as} a consequence of the fact that the hypergeometric function acquires an imaginary
part when continued to real arguments exceeding unity.

Letting $k=2$ with $z=2/(y-y^{-1})$
equation \eqref{ident1} yields the following novel identity:

\begin{equation}
{\;_3F_2 \bigg(\frac{1}{2},1,1;-\frac{1}{2},\frac{5}{2};-z^2\bigg)
=\frac{3}{z^2}\bigg\{\frac{4+3z^2}{\sqrt{1+z^2}}\frac{
\sinh^{-1}z}{z} -4\bigg\}}
\end{equation}
which, by analytic continuation,  is valid for all  $z$.

For $k=1/2$, equation \eqref{ident1} yields the { curious} identity
\begin{align}
&{\bigg(\frac{4}{(y-y^{-1})^2}\bigg)\bigg[ \;_3F_2\bigg(\frac{1}{2},1,1;\frac{1}{4},\frac{7}{4};-\frac4{(y-y^{-1})^2}\bigg)
+3\;_3F_2\bigg(\frac{1}{2},1,1;\frac{3}{4},\frac{5}{4};-\frac{4}{(y-y^{-1})^2}\bigg)\bigg]}
\nonumber\\
&={ \frac{3(1+i)\pi y^{3/2}}{y^2+1}+6\frac{\sqrt{y}}{y^2+1}\bigg[y\bigg(\tanh^{-1}\frac{1}{\sqrt{y}}-\tan^{-1}\frac{1}{\sqrt{y}}\bigg)+\bigg(\tanh^{-1}\sqrt{y}-\tan^{-1}\sqrt{y}\bigg)\bigg]}
\nonumber\\
\end{align}
for $0<y<1$, { where since the left hand side depends only on $4/(y-y^{-1})^2$, so must the right hand side, which is not at all obvious.}

Similarly, for $k=4$, we find the novel identity
\begin{align}
\;_3F_2 &\bigg(\frac{1}{2},1,1;-\frac{3}{2},\frac{7}{2};-\frac4{(y-y^{-1})^2}\bigg)
\nonumber\\
&={\frac{5(y-y^{-1})^2}y\bigg\{ \frac{y^4+y^2+1}{2(y^2+1)}\sinh^{-1}\bigg(\frac{2y}{1-y^2}\bigg)+\frac{y^2+1}{2y^2(y^2+1)}\ln\bigg(\frac{1+y}{1-y}\bigg)-\frac{y^2+1}{y}-\frac{1}{3}y\bigg\}}
\nonumber\\
&\kern1em { y^2\le1.}
\end{align}

\end{exmp}

\medskip
\begin{exmp}

Replacing in the definition \eqref{defF2h} of $F_2$, $h_6$ by \eqref{h6expr},
$h_4$ by \eqref{h4int} and $h_5$ by \eqref{h5expr}, the identity $F_2=f(k)$ yields the following 
identity:
\begin{align}
&{\frac{4}{(k+2)(y-y^{-1})^2}}\;_3F_2\bigg(\frac{1}{2},1,1;-\frac{k}{2},2+\frac{1}{2}k;-\frac4{(y-y^{-1})^2} \bigg)+(k\rightarrow-k) 
\nonumber\\
& \kern1em={2\pi y^{k}}\bigg(i-\cot\bigg(\frac{\pi k}{2}\bigg)\bigg)
+\frac{2y^{-1}}{y^{-1}-y}\bigg\{\frac{1}{k}\Big[\;_2F_1(1,k;k+1;y^{-1})+\;_2F_1(1,k;k+1;-y^{-1})\Big]
\nonumber\\
&\kern2em -\frac{1}{k+2}\Big[\;_2F_1(1,k+2;k+3;y^{-1})+\;_2F_1(1,2+k;3+k;-y^{-1})\Big]\bigg\}
\nonumber\\
&\kern2em +(y\rightarrow y^{-1}), \kern18em \Re k>-1, \qquad{| \ln y|<\infty}.
\end{align}

\end{exmp}

\medskip

Replacing in the definition{ \eqref{defF3h} }of $F_4$, $h_2$ by \eqref{h2exprG},
we find the novel identity, { valid for $Re\,  k >-1$ and all real $v$,}
\begin{align}
G^{21}_{33} &\left(\frac{(v+v^{-1})^2}{4}\bigg|\begin{array}{ccc} 1&\frac{1+k}{2}&\frac{3-k}{2}\\ 1&1&\frac{1}{2}\end{array}\right)+(k\rightarrow-k)
\nonumber\\
&{=\frac{(v+v^{-1})^2}{i\pi^{1/2}(v-v^{-1})} \bigg\{\frac{2}{\pi}\cos\bigg(\frac{\pi k}{2}\bigg) \bigg[\frac{1}{k+2}\Re[\frac{2v}{v+v^{-1}}\;_2F_1(1,2+k;3+k;iv)-(v\rightarrow v^{-1}) \bigg]}
\nonumber\\
&{-\frac{1}{k+1}\Re\bigg[\frac{v^2-1}{i(v-v^{-1})}\; _2F_1(1,k+1;k+2;iv^{-1})-(v\rightarrow v^{-1}) \bigg]-\frac{i }{2}(v^k-v^{-k}) \bigg\}.}
\end{align}
In particular, for $k=0${ and all real $b$,}
\begin{equation}
 G^{21}_{33}\left(\cos^2b\bigg|\begin{array}{ccc} 1&\frac{1}{2}&\frac{3}{2}\\ 1&1&\frac{1}{2}\end{array}\right)={\frac{2}{\pi^{3/2}}\cos b \cot b \ln\left(\frac{\cos b}{1+\sin b}\right)}
\end{equation}

\medskip
The above identities appear new. Thus, it seems that by employing 
the global relation to the solution of certain boundary value problems, 
it is possible to construct new formulas in the area of special functions, 
although it is not clear how the form of these identities can be predicted in advance.

\section*{Acknowledgement}
ASF acknowledges partial support from the Guggenheim Memorial foundation, USA.
The authors are grateful to Eugene Shargorodsky for several careful suggestions and
in particular for his observation expressed in Remark \ref{rem3.2}.

\bigskip

\end{document}